\pgfplotsset{compat=1.14}
  \let\c@algocf\c@function
\crefname{cor}{Cor.}{Cor.}
\Crefname{cor}{Corollary}{Corrollaries}
\theoremstyle{plain}
\newtheorem{theorem}{Theorem}{\bfseries}{\normalfont}
\newtheorem{obs}[theorem]{Observation}{\bfseries}{\normalfont}
{\bfseries}{\normalfont}
\newtheorem{prop}[theorem]{Proposition}{\bfseries}{\normalfont}
\newtheorem{lemma}[theorem]{Lemma}{\bfseries}{\normalfont}
\theoremstyle{definition}
\newtheorem{definition}{Definition}{\bfseries}{\normalfont}
\DeclareMathOperator{\poly}{poly}
\definecolor{ourblue}{RGB}{135,206,250}
\definecolor{ourgreen}{RGB}{0,100,0}
\definecolor{ourred}{RGB}{176,23,31}
\definecolor{lipicsyellow}{rgb}{0.99,0.78,0.07}
\def\NAT@spacechar{~}
 \newcommand{\appref}[1]{{\hyperref[#1]{\appsymb}}}
\title{Enumerating Isolated Cliques in Temporal Networks}
\author{Hendrik~Molter\thanks{Supported by the DFG, project MATE (NI 369/17).}}
\newcommand\CoAuthorMark{\footnotemark[\arabic{footnote}]} 
\author{Rolf~Niedermeier}
\author{Malte~Renken\protect\CoAuthorMark} 
\affil{\small Algorithmics and Computational Complexity, Fakult\"at
IV, TU Berlin, Germany, \\
 \texttt{\{h.molter, rolf.niedermeier, m.renken\}@tu-berlin.de}}
\date{}
\begin{document}

\maketitle

\begin{abstract}
Isolation is a concept from the world of clique enumeration that is mostly used to model communities that do not have much contact to the outside world.
Herein, a clique is considered \emph{isolated} if it has few edges connecting it to the rest of the graph.
Motivated by recent work on enumerating cliques in temporal networks,
we lift the isolation concept to this setting.
We discover that the addition of the time dimension
leads to six distinct natural isolation concepts.
Our main contribution is the development of fixed-parameter enumeration algorithms for five of these six clique types
employing the parameter ``degree of isolation''.
On the empirical side, we implement and test these algorithms
on (temporal) social network data, obtaining encouraging
preliminary results. 


\medskip

\noindent \textbf{\textsf{Keywords:}} Community detection;  Dense Subgraphs; 
Social network analysis;
Time-evolving data; Enumeration algorithms; Fixed-parameter tractability. 

\end{abstract}

\section{Introduction}\label{sec:intro}
\begin{quote}
``Isolation is the one sure way to human happiness.''
\quad \quad -- Glenn Gould

\end{quote}
Clique detection and enumeration is a fundamental primitive of 
complex network analysis. In particular, there are numerous approaches 
(both from a more theoretical and from a more heuristic side) 
for listing all maximal cliques (that is, fully-connected subgraphs) in 
a graph.%
\footnote{\emph{Network} and \emph{Graph} are used interchangeably}
It is well-known that finding 
a maximum-cardinality clique is computationally hard (NP-hard, hard in the 
approximation sense and hard when parameterized by the 
clique cardinality).  Hence, heuristic approaches usually govern 
computational approaches to clique finding and enumeration. 
{F}rom now on, we focus on the case of enumerating maximal 
cliques. There have been numerous efforts to 
provide both theoretical guarantees and practically useful 
algorithms~\cite{eppstein2013listing,HuffnerKMN09,Ito2009,MaxIsolatedCliques,Tomita2006}.
In particular, to simplify (in a computational sense) the task on the one 
hand and to enumerate more meaningful maximal cliques (for specific application contexts)
on the other hand, \citet{Ito2009} 
introduced and investigated the enumeration of maximal cliques that 
are ``isolated''. Roughly speaking, isolation means that the 
connection of the maximal clique to the rest of the graph is 
limited, that is, there are few edges with one endpoint in the 
clique and one endpoint outside the clique; 
indeed, the degree of isolation  
can be controlled by choosing specific values of 
a corresponding isolation parameter. 
For instance, think of social networks where one wants to spot more or less
segregated sub-communities with little interaction to the world outside 
but intensive interaction inside the community.
We mention in passing that recently there have been (only) theoretical 
studies on the concept of ``secludedness''~\cite{Bev+18,fomin2017parameterized,chechik2017secluded} which is somewhat similar to the older isolation
concept: whereas for isolation one requests ``few outgoing edges'', 
for secludedness one asks for ``few outneighbors''; while finding 
isolated cliques becomes tractable~\cite{Ito2009}, finding secluded ones remains 
computationally hard~\cite{Bev+18}.

\citet{Ito2009} showed that in static networks isolated cliques can be 
enumerated efficiently; the only exponential factor in the running
time depends on the ``isolation parameter'', and so fairly isolated 
cliques can be  enumerated quite quickly. In follow-up work,
the isolation concept then was significantly extended and 
more thorough experimental studies (also with financial networks) 
have been performed~\cite{HuffnerKMN09,MaxIsolatedCliques}.
However, analyzing complex networks more and more means studying 
time-evolving networks. Hence, computational 
problems known from static networks also need 
to be solved on temporal networks (mathematically, these are graphs with 
fixed vertex set but a time-dependent edge set)~\cite{holme2012temporal,latapy2017stream,michail2016introduction}.
Thus, not surprisingly, the enumeration of maximal cliques has recently 
been lifted to the temporal setting~\cite{BHMMNS18,himmel17,viardCliqueTCS16,ViardML18}. While getting algorithmically more challenging than 
in the static network case, nevertheless the empirical results 
that have been achieved are encouraging. 
In this work, we now fill a gap by proposing 
to lift also the isolation concept to the 
temporal clique enumeration context, otherwise using the same modeling
of temporal cliques as in previous work.

Since we believe that enumerating isolated cliques has 
its most important applications in community detection scenarios, 
we focus on only two of three basic isolation concepts described by 
\citet{MaxIsolatedCliques} for the static setting.
More specifically, we only consider ``maximal isolation'' (every vertex has small outdegree) and ``average isolation'' (vertices have small outdegree on average),
but do not study ``minimal isolation'' (at least one vertex has small outdegree). 
Nevertheless we still face a richer modeling than in the static case since
isolation can happen in two ``dimensions'': vertices and time; 
for both we can consider maximum and average isolation.
With this distinction, we end up with eight natural 
ways to model isolation, where two ``pairs'' of isolation models turn out to be 
equivalent, finally leaving six different temporal 
isolation concepts for further study.

Our main contributions are as follows:
First, as indicated above, we already do some conceptual work 
with identifying six, mathematically formalized concepts of isolation
for temporal networks.  Second, building on and extending the algorithmic 
framework of \citet{MaxIsolatedCliques} for static networks, for small 
isolation values we provide efficient algorithms
for five of our six isolated clique enumeration models
and prove worst-case performance bounds for them.%
\footnote{In terms of the language of parameterized algorithmics, we show that these cases are fixed-parameter tractable when parameterized by isolation value.}
In this context, a main algorithmic contribution is the development of 
tailored subroutines (that are only partially shared between 
different isolation concepts). Finally, on the empirical side 
we contribute an encouraging first experimental analysis of 
our algorithms based on social network data. 
Our preliminary experiments
indicate significant differences (mostly in terms of practical running time) but
also (sometimes surprising) accordances between the concepts.

\section{Preliminaries}\label{sec:prelims}
In this section we first give some basic notation and terminology.
We then recall the isolation concept for static graphs and transfer it to temporal graphs.
Lastly, we give some motivating examples that are tailored to the different temporal isolation concepts arising
and try to give an intuitive understanding of the differences between the various temporal isolation models.

\paragraph{Static Graphs.}
Graphs in this paper are assumed to be undirected and simple.
To clearly distinguish them from temporal graphs, they are sometimes referred to as \emph{static} graphs.
Let $G=(V,E)$ be a static graph. We denote the vertex set of $G$ with $V(G)$ and the edge set of~$G$ with $E(G)$. For $v \in V(G)$ we use $\deg_G(v)$ for the number of edges ending at $v$.
For $v \in A \subseteq V$, $\outdeg_G(v, A)$ denotes the number of edges with one endpoint $v$ and the other one outside of $A$.
Further $\outdeg_G(A) := \sum_{v \in A} \outdeg_G(v, A)$.
We use $\delta_G(A) := \min_{v \in A} \deg_G(v)$ for the minimum degree.
In all these notations, we omit the index $G$ if there is no ambiguity.

\paragraph{Temporal Graphs and Temporal Cliques.}
A \emph{temporal graph} is a tuple $\TG = (V, E_1, E_2, \dots, E_\tau)$
of a vertex set $V$ and~$\tau$ edge sets $E_i \subseteq \binom{V}{2}$.
The graphs $G_i := \left(V, E_i \right)$ are called the \emph{layers} of~$\TG$.
The \emph{time edge set}~$\TE(\TG)$ (or $\TE$ if $\TG$ is clear from the context) is the disjoint union~$\coprod_{t=1}^\tau E_i$ of the edge sets of the layers of $\TG$.
For any $1 \leq a \leq b \leq \tau$ we define the (static) graphs $\bigcup_{t=a}^b G_t := \left(V, \bigcup_{t=a}^b E_t\right)$
and $\bigcap_{t=a}^b G_t := \left(V, \bigcap_{t=a}^b E_t\right)$.

Following the definition of \citet{viardCliqueTCS16}, a \emph{$\Delta$-clique} (for some $\Delta\in\NN$) of $\TG$ is a tuple $(C, [a, b])$  with $C \subseteq V$ and $1 \leq a \leq b \leq \tau$ such that $C$ is a clique in $\bigcup_{i=t}^{t+\Delta}G_i$ for all $t \in [a, b - \Delta]$.

It is easily observed that $(C, [a, b])$ is a $\Delta$-clique in $\TG$ if and only if $(C, [a, b-\Delta])$ is a 0-clique in~$\TG'=\left(V, E'_1, \ldots E'_{\tau-\Delta}\right)$ where $E'_i := \bigcup_{t=i}^{i+\Delta} E_t$.
Due to this, we will only concern ourselves with~$\Delta = 0$ and simply refer to 0-cliques as \emph{temporal cliques}.

\paragraph{Temporal Isolation.} We first introduce the isolation concepts for static graphs and then describe how we transfer them to the temporal setting. 
In a (static) graph $G$, a clique $C \subseteq V(G)$ is called \emph{avg-$c$-isolated} if $\outdeg_G(C) < c \cdot \abs{C}$
where $c \in \QQ$ is some positive number~\cite{Ito2009}.
Further it is called \emph{max-$c$-isolated} if $\max_{v \in C} \outdeg_G(v) < c$.
Clearly max-$c$-isolation implies avg-$c$-isolation.

Moving to temporal graphs, we want to define an isolation concept for temporal cliques. Recall that a temporal clique consist of a vertex set and a time interval.
We apply the isolation requirement both on a vertex and on a time level, meaning that for each dimension we can either require the average outdegree (as in the static avg-$c$-isolation) or the maximum outdegree (as in the static max-$c$-isolation) to be small.
To make this more clear, we give some examples. For instance, we can require that, on average over all layers, the maximum outdegree in a layer is small. 
Or we can require that the average outdegree must be small in every single layer. 
Note that the ordering of the requirements for the time dimension and the vertex dimension also matters. 
Requiring the average outdegree to be small in every layer is different from requiring that, on average over all vertices, the maximum degree over all time steps must be small. 
Having two isolation requirements (avg and max) for two dimensions with two possible orderings, we arrive at eight canonical temporal isolation types. 
However it turns out that if we use the same requirement for both dimensions, they behave commutatively, so it boils down to six \emph{different} temporal isolation types. 
In the following, we give a formal definition for each of the six temporal isolation types. 
To make the names less confusing, we use ``usually'' to refer to the avg isolation requirement in the time dimension and ``alltime'' to refer to the max isolation requirement in the time dimension.
\todo{understandable enough?}

\begin{definition}[Temporal Isolation]
Let $c\in\QQ$. A temporal clique $(C, [a, b])$ in a temporal graph $\TG=(V,E_1, \ldots, E_\tau)$ is called

\begin{itemize}
\item 
\emph{\alltimeavgisolated{c}} if 
$\max_{i \in [a,b]} \sum_{v \in C} \outdeg_{G_i}(v, C) < c \cdot \abs{C}$,

\item
\emph{\alltimemaxisolated{c}} if 
$\max_{v \in C} \max_{i \in [a,b]} \outdeg_{G_i}(v, C) < c$,

\item
\emph{\avgalltimeisolated{c}} if
$\sum_{v \in C} \max_{i \in [a,b]} \outdeg_{G_i}(v, C) < c \cdot \abs{C}$,

\item 
\emph{\maxoftenisolated{c}} if 
$\max_{v \in C} \sum_{i=a}^b  \outdeg_{G_i}(v, C) < c \cdot (b+1-a)$,

\item 
\emph{\oftenavgisolated{c}} if
$\sum_{i=a}^b \sum_{v \in C} \outdeg_{G_i}(v, C) < c \cdot \abs{C} \cdot (b+1-a)$, and

\item 
\emph{\oftenmaxisolated{c}} if
$\sum_{i=a}^b \max_{v \in C} \outdeg_{G_i}(v, C) < c \cdot (b+1-a)$.

\end{itemize}
\end{definition}

We define the set of \emph{isolation types} as $\Iii = \big\{$\alltimemax{}, \alltimeavg{}, \maxoften{}, \oftenavg{}, \avgalltime{}, \oftenmax{}$\big\}$.

For all isolation types $I \in \Iii$, an $I$-$c$-isolated temporal clique $(C, [a, b])$ is called \emph{time-maximal} if there is no other $I$-$c$-isolated clique
$(C, [a', b'])$ with $C' \supseteq C$ and $[a', b'] \supset [a, b]$.
If there is no $I$-$c$-isolated temporal clique $(C', [a, b])$ with $C' \supset C$, then we call $(C, [a, b])$ \emph{vertex-maximal}.
We call $(C, [a, b])$ \emph{maximal} if it is time-maximal and vertex-maximal.

Subsequently we try to give some intuition about the different isolation concepts.
Note that for sufficiently small $c$ they all converge to disallowing \emph{any} outgoing edges.
We start with the most restrictive and perhaps also most straightforward isolation type, that is, \textbf{\alltimemax{}-isolation}. 
Here all vertices are required to have little or no outside contact at all times---think of a quarantined group.
Slightly less restrictive is the notion of \textbf{\avgalltime{}-isolation}.
Here it would be possible to have some distinguished ``bridge'' vertices inside the clique with relatively much outside contact,
as long as most vertices never have many outgoing edges.
If we reorder the terms, we obtain \textbf{\alltimeavg{}-isolation}.
In contrast to the previous case, now the set of ``bridge'' vertices may be different at any point in time.
A typical situation where this could occur is that there is a low bandwidth connection between the clique and the rest of the graph,
only allowing a limited number of communications to occur at any given moment.
The next isolation concept, \textbf{\oftenmax{}-isolation}, can be seen as allowing short bursts of activity, in which some or even all vertices have many outgoing edges,
as long as the entire clique is isolated most of the time.
Again, if we reorder the terms, we get a less restrictive concept (\textbf{\maxoften{}-isolation}).
Here, the bursts of activity may happen at different times for different vertices.
Finally, \textbf{\oftenavg{}-isolation} is the least restrictive of these notions, 
only limiting the total number of outside contacts over all vertices and layers that are part of the temporal clique.

\section{Basic Facts}

We now prove some important facts that we will make use of in the correctness proofs of our algorithms.
Our first observation concerns the relation between different types of isolation.
It is easily checked using the definitions above.

\begin{obs}\label{thm:isolation_relations}
Let $\TG=(V,E_1, \ldots, E_\tau)$ be a temporal graph. The following implication diagram holds for any $a \leq b$, any clique $C$ in $\bigcap_{t=a}^b G_t$, and any $c > 0$:

\begin{center}
\tikzcdset{
  cells={font=\everymath\expandafter{\the\everymath\displaystyle}},
}

\begin{tikzcd}[row sep=2.5em, arrows=Rightarrow]
(C, [a, b]) \text{ \alltimemaxisolated{c}} \arrow[r] \arrow[d] & 
(C, [a, b]) \text{ \avgalltimeisolated{c}} \arrow[d] \\
(C, [a, b]) \text{ \oftenmaxisolated{c}} \arrow[d] &
(C, [a, b]) \text{ \alltimeavgisolated{c}} \arrow[d] \\
(C, [a, b]) \text{ \maxoftenisolated{c}}  \arrow[r] \arrow[d] &
(C, [a, b]) \text{ \oftenavgisolated{c}} \arrow[d]\\
C \text{ is max-$c$-isolated in } \bigcap_{i=a}^b G_i \arrow[r]
& C \text{ is avg-$c$-isolated in } \bigcap_{i=a}^b G_i
\end{tikzcd}

\end{center}
\end{obs}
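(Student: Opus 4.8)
The plan is to fix the clique $C$ and the interval $[a,b]$, set up one uniform notation, and then check each of the nine arrows of the diagram by a short monotonicity argument. Put $n := \abs{C}$ and $\ell := b+1-a$, so $\ell \geq 1$ since $a \leq b$, and for $v \in C$ and $i \in [a,b]$ abbreviate $d_v^i := \outdeg_{G_i}(v,C) \geq 0$ and $e_v := \outdeg_{\bigcap_{t=a}^b G_t}(v,C)$. With this notation every isolation condition turns into an inequality purely in the $d_v^i$ (resp.\ the $e_v$): \alltimemaxisolated{c} reads $\max_{v,i} d_v^i < c$, \avgalltimeisolated{c} reads $\sum_{v}\max_i d_v^i < cn$, \alltimeavgisolated{c} reads $\max_i \sum_v d_v^i < cn$, \maxoftenisolated{c} reads $\max_v \sum_i d_v^i < c\ell$, \oftenmaxisolated{c} reads $\sum_i \max_v d_v^i < c\ell$, \oftenavgisolated{c} reads $\sum_{v,i} d_v^i < cn\ell$, while ``$C$ is max-$c$-isolated in $\bigcap_{t=a}^b G_t$'' and ``$C$ is avg-$c$-isolated in $\bigcap_{t=a}^b G_t$'' read $\max_v e_v < c$ and $\sum_v e_v < cn$.

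All arrows except the two descending to the static graph $\bigcap_{t=a}^b G_t$ stay among these inequalities, and each of them follows at once from two trivial facts about finite families of nonnegative reals: (i)~a sum of $k \geq 1$ numbers each strictly below $x$ is strictly below $kx$; and (ii)~$\max_i \sum_v f(v,i) \leq \sum_v \max_i f(v,i)$, i.e.\ pulling a $\max$ out of a $\sum$ cannot decrease the value. For instance, \avgalltimeisolated{c} gives $\max_i \sum_v d_v^i \leq \sum_v \max_i d_v^i < cn$ by~(ii), which is exactly \alltimeavgisolated{c}; and \alltimeavgisolated{c} gives $\sum_{v,i} d_v^i < \ell \cdot cn$ by~(i), since each of the $\ell$ inner sums $\sum_v d_v^i$ is at most $\max_{i'}\sum_v d_v^{i'} < cn$. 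The remaining five such arrows (\alltimemaxisolated{c}~$\Rightarrow$~\avgalltimeisolated{c}, \alltimemaxisolated{c}~$\Rightarrow$~\oftenmaxisolated{c}, \oftenmaxisolated{c}~$\Rightarrow$~\maxoftenisolated{c}, \maxoftenisolated{c}~$\Rightarrow$~\oftenavgisolated{c}, and $\max_v e_v < c \Rightarrow \sum_v e_v < cn$) have the same shape, obtained by interchanging the vertex index $v$ with the time index $i$, or $\max$ with $\sum$.

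The only arrows needing a dedicated observation are \maxoftenisolated{c}~$\Rightarrow$~($C$ is max-$c$-isolated in $\bigcap_t G_t$) and \oftenavgisolated{c}~$\Rightarrow$~($C$ is avg-$c$-isolated in $\bigcap_t G_t$). The key point is that every edge of $\bigcap_{t=a}^b G_t$ lies in each layer $G_i$ with $i \in [a,b]$, so $e_v \leq d_v^i$ for all $v \in C$ and $i \in [a,b]$, and hence $\ell\, e_v \leq \sum_i d_v^i$. Substituting into \maxoftenisolated{c} yields $\ell\, e_v \leq \sum_i d_v^i \leq \max_w \sum_i d_w^i < c\ell$, so $e_v < c$ for every $v$; summing $\ell\, e_v \leq \sum_i d_v^i$ over $v$ and substituting into \oftenavgisolated{c} yields $\ell \sum_v e_v \leq \sum_{v,i} d_v^i < cn\ell$, so $\sum_v e_v < cn$. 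This establishes all nine arrows. There is no real obstacle here; the only thing to watch is that each $\max$/$\sum$ exchange is used in the correct direction and that strictness survives each step, which relies on the index sets $C$ and $[a,b]$ being nonempty — true because $C$ is a nonempty clique and $a \leq b$.
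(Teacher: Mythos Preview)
Your proof is correct and follows exactly the approach the paper intends: the paper merely states that the observation ``is easily checked using the definitions above'' without giving any details, and your write-up is precisely that check, carried out cleanly via the two elementary monotonicity facts and the inequality $e_v \le d_v^i$ coming from $\bigcap_t G_t \subseteq G_i$. The only caveat is that your final remark assumes $C \neq \emptyset$, which the statement does not literally require; this is a harmless edge case that the paper implicitly ignores as well.
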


Note that \cref{thm:isolation_relations} does not hold for \emph{maximal} isolated temporal cliques:
A maximal alltime-max-c-isolated clique is not necessarily a maximal usually-avg-c-isolated clique.

Next, we state two lemmata limiting the minimal size of isolated cliques,
helping us to confine the search space for our algorithms.
They are inspired by the ideas employed by \citet{MaxIsolatedCliques} in the static setting.

\begin{lemma}\label{thm:basic_lemma}
	Let $G$ be a static graph and let $C$ be a clique in $G$.
	Then, any avg-$c$-isolated subset $C' \subseteq C$ has size $\abs{C'} > \delta(C) - c + 1$.
\end{lemma}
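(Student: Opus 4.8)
The plan is to lower-bound the outdegree $\outdeg_G(v, C')$ of each individual vertex $v \in C'$ and then sum over all of $C'$. The crucial structural fact is that $C$ is a clique and $C' \subseteq C$, so every $v \in C'$ is adjacent to all $\abs{C'}-1$ other vertices of $C'$. Consequently every neighbour of $v$ not in $C'$ contributes to $\outdeg_G(v,C')$, and there are exactly $\deg_G(v) - (\abs{C'}-1)$ such neighbours. Using $\deg_G(v) \geq \delta(C)$ (which holds since $v \in C$), this yields
\[
\outdeg_G(v, C') \;=\; \deg_G(v) - \abs{C'} + 1 \;\geq\; \delta(C) - \abs{C'} + 1
\]
for every $v \in C'$.

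Summing this inequality over all $v \in C'$ gives
\[
\outdeg_G(C') \;=\; \sum_{v \in C'} \outdeg_G(v, C') \;\geq\; \abs{C'}\cdot\bigl(\delta(C) - \abs{C'} + 1\bigr).
\]
On the other hand, avg-$c$-isolation of $C'$ means precisely $\outdeg_G(C') < c\cdot\abs{C'}$. (In particular $c\cdot\abs{C'} > 0$, so $C'$ is nonempty and we may divide by $\abs{C'}$.) Combining the two bounds and cancelling the positive factor $\abs{C'}$ yields $\delta(C) - \abs{C'} + 1 < c$, which rearranges to the claimed inequality $\abs{C'} > \delta(C) - c + 1$.

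I do not expect any real obstacle here; the argument is a short counting computation. The only point requiring a little care is the bookkeeping of neighbours of $v$ — splitting them into those inside $C'$, those in $C \setminus C'$, and those outside $C$ — and the observation that, because $C$ is a clique, the count of neighbours of $v$ lying outside $C'$ depends only on $\deg_G(v)$ and $\abs{C'}$, not on $\abs{C}$ itself. This is what makes the bound depend on $\delta(C)$ rather than on the size of $C$.
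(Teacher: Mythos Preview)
Your proof is correct and follows essentially the same approach as the paper. The paper argues by contraposition (assuming $\abs{C'} \leq \delta(C) - c + 1$ and deducing that every vertex of $C'$ has outdegree at least $c$), while you argue directly by summing the per-vertex outdegree bound and dividing by $\abs{C'}$; the underlying computation $\outdeg_G(v,C') = \deg_G(v) - (\abs{C'}-1) \geq \delta(C) - \abs{C'} + 1$ is identical in both.
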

\begin{proof}
	Suppose $\abs{C'} \leq \delta(C) - c +  1$.
	Then any vertex $w \in C'$ has
	$\outdeg(w, C') = \deg(w) - (\abs{C'} - 1) \geq \delta(C) - (\abs{C'}-1) \geq c$.
	Thus, $C'$ is not avg-$c$-isolated.
\end{proof}

\begin{lemma}\label{thm:lemma1b}
Let $C$ be a clique in $G_\cap := \bigcap_{i=1}^t G_i$ for some $G_i=(V,E_i)$.
Then any subset $C' \subseteq C$ for which $(C', [1, t])$ is \oftenavgisolated{c}
has size $\abs{C'} > \delta_{G_\cap}(C) - c + 1$.
\end{lemma}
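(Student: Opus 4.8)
The plan is to argue by contradiction, exactly in the spirit of \autoref{thm:basic_lemma}, but applied to every layer simultaneously. So suppose $\abs{C'} \leq \delta_{G_\cap}(C) - c + 1$; the goal is to show that this forces $\sum_{i=1}^t \sum_{v \in C'} \outdeg_{G_i}(v, C') \geq c \cdot \abs{C'} \cdot t$, which is precisely the negation of $(C', [1,t])$ being \oftenavgisolated{c}.

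The key point I would isolate is that the static bound transfers layerwise. Fix a layer $i \in [1,t]$ and a vertex $w \in C'$. Since $E(G_\cap) \subseteq E_i$, we have $\deg_{G_i}(w) \geq \deg_{G_\cap}(w) \geq \delta_{G_\cap}(C)$ (the last inequality because $w \in C' \subseteq C$). Moreover $C' \subseteq C$ is a clique in $G_\cap$ and hence a clique in $G_i$, so exactly $\abs{C'}-1$ of the edges incident to $w$ in $G_i$ have their other endpoint inside $C'$. Therefore
\[
  \outdeg_{G_i}(w, C') \;=\; \deg_{G_i}(w) - (\abs{C'}-1) \;\geq\; \delta_{G_\cap}(C) - (\abs{C'}-1) \;\geq\; c,
\]
where the final step uses the assumption $\abs{C'} \leq \delta_{G_\cap}(C) - c + 1$. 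Since this lower bound of $c$ holds for every $w \in C'$ and every $i \in [1,t]$, summing over all $\abs{C'}\cdot t$ pairs $(w,i)$ yields the claimed inequality, contradicting \oftenavg{}-$c$-isolation. Hence $\abs{C'} > \delta_{G_\cap}(C) - c + 1$.

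I do not expect a genuine obstacle here: the whole argument is a one-line adaptation of \autoref{thm:basic_lemma}. The only things to get right are the two monotonicity-type observations, namely that passing from the intersection graph $G_\cap$ to a single layer $G_i$ can only increase degrees (so the minimum-degree bound on $C$ survives) and likewise preserves the clique property of $C'$ (so that $\outdeg_{G_i}(w,C') = \deg_{G_i}(w) - (\abs{C'}-1)$ is valid in each layer). Once those are noted, the sum bound is immediate. (The edge case $C' = \emptyset$ is harmless, since the empty set is never \oftenavgisolated{c} because the defining inequality is strict.)
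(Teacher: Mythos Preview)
Your proof is correct and follows essentially the same approach as the paper: assume $\abs{C'} \leq \delta_{G_\cap}(C) - c + 1$, observe that degrees in each layer $G_i$ dominate those in $G_\cap$, and conclude that every vertex of $C'$ has outdegree at least $c$ in every layer, contradicting \oftenavg{}-$c$-isolation. The only cosmetic difference is that the paper packages the per-layer step as a direct invocation of \autoref{thm:basic_lemma} (noting $\delta_{G_\cap}(C) \leq \delta_{G_i}(C)$), whereas you inline that computation.
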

\begin{proof}
Suppose not.
Then $\abs{C'} \leq \delta_{G_\cap}(C) - c + 1 \leq \delta_{G_i}(C) - c + 1$ for all $i$.
By \cref{thm:basic_lemma}, $C'$ is not avg-$c$-isolated in any $G_i$ and thus certainly not \oftenavgisolated{c}.
\end{proof}

Next, we show that some vertices must always be contained in vertex-maximal (and thus also in maximal) isolated cliques.
This will allow us to refrain from searching for maximal isolated cliques that do not contain these vertices.

\begin{lemma}\label{thm:lemma2b}
Let $C$ be a clique in $G_\cap := \bigcap_{i=1}^t G_i$ for some $G_i=(V,E_i)$ 
and let $C' \subseteq C$ be such that $(C', [1, t])$ is a vertex-maximal \oftenavgisolated{c} temporal clique.
Then $C'$ contains all vertices $v \in C$ that fulfill
\begin{align*}
\sum_{i=1}^t \deg_{G_i}(v) \leq t(\delta_{G_\cap}(C) + \abs{C'} + 1) \,.
\tag{$*$}
\end{align*}
\end{lemma}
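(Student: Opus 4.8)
The plan is to argue by contradiction through vertex-maximality. Suppose some $v \in C$ satisfies ($*$) but $v \notin C'$; I would show that then $(C' \cup \{v\}, [1,t])$ is again \oftenavgisolated{c}, contradicting the vertex-maximality of $(C', [1,t])$ since $C' \cup \{v\} \supsetneq C'$. A preliminary observation keeps the bookkeeping clean: because $C$ is a clique in $G_\cap := \bigcap_{i=1}^t G_i$, every subset of $C$ --- in particular $C' \cup \{v\}$ --- induces a clique in each layer $G_i$ with $i \in [1,t]$, so $(C' \cup \{v\}, [1,t])$ is a genuine temporal clique; moreover $v$ is adjacent to every vertex of $C'$ in every such layer.

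The core step is a counting identity relating the often-avg quantity of $C' \cup \{v\}$ to that of $C'$. Passing from $C'$ to $C' \cup \{v\}$, each $w \in C'$ loses exactly one outgoing edge per layer --- namely $wv$, present in all $t$ layers --- while $v$ itself contributes $\deg_{G_i}(v) - \abs{C'}$ outgoing edges in layer $G_i$. Summing over $i \in [1,t]$ gives
\[
\sum_{i=1}^t \sum_{w \in C' \cup \{v\}} \outdeg_{G_i}(w, C' \cup \{v\})
= \Big(\sum_{i=1}^t \sum_{w \in C'} \outdeg_{G_i}(w, C')\Big) + \sum_{i=1}^t \deg_{G_i}(v) - 2t\abs{C'}.
\]

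Next I would insert the two bounds at hand. The \oftenavgisolated{c}-property of $(C', [1,t])$ gives $\sum_{i=1}^t \sum_{w \in C'} \outdeg_{G_i}(w, C') < c\abs{C'}t$, and ($*$) gives $\sum_{i=1}^t \deg_{G_i}(v) \leq t(\delta_{G_\cap}(C) + \abs{C'} + 1)$. Substituting, the right-hand side above is strictly less than $c\abs{C'}t + t(\delta_{G_\cap}(C) - \abs{C'} + 1)$. The finishing move is \cref{thm:lemma1b}, applied to the \oftenavgisolated{c} clique $(C', [1,t])$: it yields $\abs{C'} > \delta_{G_\cap}(C) - c + 1$, equivalently $\delta_{G_\cap}(C) - \abs{C'} + 1 < c$. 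Hence the bound becomes $< c\abs{C'}t + ct = c(\abs{C'}+1)t = c \cdot \abs{C' \cup \{v\}} \cdot t$, which is exactly the \oftenavg{}-$c$-isolation inequality for $(C' \cup \{v\}, [1,t])$ --- the contradiction I was aiming for.

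I do not expect a genuine obstacle here. The counting identity is routine; the only subtlety is that hypothesis ($*$) carries the somewhat opaque term $\abs{C'}+1$, and the whole argument hinges on \cref{thm:lemma1b} supplying precisely the matching slack $\delta_{G_\cap}(C) - \abs{C'} + 1 < c$ that lets the inequality close, so that cancellation is the step whose arithmetic I would re-verify. One must also remember to actually use $v \notin C'$ --- it underlies both the ``$-1$ per layer'' count for $w \in C'$ and the ``$\deg_{G_i}(v) - \abs{C'}$'' count for $v$ --- and, tacitly, $t \geq 1$.
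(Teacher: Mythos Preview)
Your proof is correct and follows essentially the same approach as the paper: assume $v\in C\setminus C'$ satisfies~$(*)$, compute the total outdegree of $C'\cup\{v\}$, and use \cref{thm:lemma1b} to close the inequality and contradict vertex-maximality. The only cosmetic difference is that the paper routes the bookkeeping through $\outdeg_{G_i}(v,C)$ as an intermediate quantity, whereas you work directly with $\deg_{G_i}(v)$; the arithmetic collapses to the same bound $c|C'|t + t(\delta_{G_\cap}(C) - |C'| + 1) < c(|C'|+1)t$ either way.
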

\begin{proof}
We prove this statement by contraposition. Suppose $(*)$ holds for some $v \in C \setminus C'$.
Let~$k := \abs{C}$ and $k' := \abs{C'}$.
Then we have that
\begin{align*}	
\sum_{i=1}^t \outdeg_{G_i}(v, C) = 
\sum_{i=1}^t \left( \deg_{G_i}(v) - (k-1)\right) \leq t(\delta_{G_\cap}(C) - k + k' + 2) 
\end{align*}
and thus
\begin{align*}
\sum_{i=1}^t \outdeg_{G_i}(C'\cup \set{v}) &= 
\sum_{i=1}^t \left( \outdeg_{G_i}(C') + (k-k'-1) + \outdeg_{G_i}(v, C) - k' \right)
\\ &< t \left(c k' +  \delta_{G_\cap}(C) - k' + 1  \right)
\\&= t \left( c(k'+1) + \delta_{G_\cap}(C) -c -k' + 1 \right)
\\&< ct(k'+1),
\end{align*}
where the last inequality is due to \cref{thm:lemma1b}.
Thus, $C' \cup \{v\}$ is \oftenavgisolated{c}.
\end{proof}

\begin{lemma}\label{thm:highest_degree_vertices_b}
Let $C$ be a clique in $G_\cap := \bigcap_{i=1}^t G_i$ for some $G_i=(V,E_i)$ and let $C' \subseteq C$ such that $(C', [1, t])$ is a vertex-maximal \oftenavgisolated{c} temporal clique.
Let $\tilde{C} \subseteq C$ consist of the $\delta_{G_\cap}(C) - c + 2$ vertices $v$ with the lowest values of 
$
\sum_{i=1}^t \deg_{G_i}(v)
$.
Then $\tilde{C} \subseteq C'$.
\end{lemma}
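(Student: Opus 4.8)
The plan is to reduce the statement to two complementary degree estimates and then use \cref{thm:lemma2b} essentially as a black box. Throughout, write $k := \abs{C}$, $k' := \abs{C'}$, $\delta := \delta_{G_\cap}(C)$, and $m := \delta - c + 2$ for the size of $\tilde C$, and abbreviate $\sum_{i=1}^t$ by $\sum_i$; we may assume $1 \le m \le k$ and $C' \ne C$, the other cases being trivial or vacuous. By \cref{thm:lemma1b} we have $k' \ge m$. The contrapositive of \cref{thm:lemma2b} states that every $v \in C \setminus C'$ satisfies $\sum_i \deg_{G_i}(v) > t(\delta + k' + 1)$. Hence it suffices to prove that at least $m$ vertices $w \in C'$ satisfy $\sum_i \deg_{G_i}(w) \le t(\delta + k' + 1)$: letting $S := \{w \in C : \sum_i \deg_{G_i}(w) \le t(\delta + k' + 1)\}$, \cref{thm:lemma2b} gives $S \subseteq C'$, we will have shown $\abs{S} \ge m$, and every vertex of $C \setminus S$ has strictly larger summed degree than every vertex of $S$; so the $m$ vertices of smallest summed degree all lie in $S \subseteq C'$, that is, $\tilde C \subseteq C'$ (ties are harmless since $S$ as a whole lies in $C'$).

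The work is thus in the counting claim. Since $C'$ is a clique in $G_\cap$, for $w \in C'$ we have $\sum_i \deg_{G_i}(w) = t(k'-1) + \sum_i \outdeg_{G_i}(w, C')$, so the claim is equivalent to: at most $k' - m$ vertices $w \in C'$ have $\sum_i \outdeg_{G_i}(w, C') > t(\delta + 2)$. Two inputs drive this. First, each $w \in C'$ has $\deg_{G_i}(w) \ge \deg_{G_\cap}(w) \ge \delta$, hence $\outdeg_{G_i}(w, C') = \deg_{G_i}(w) - (k'-1) \ge \delta - k' + 1 \ge 0$ (using $k' \le k \le \delta + 1$), so $\sum_i \outdeg_{G_i}(w, C') \ge t(\delta - k' + 1)$ for every $w \in C'$. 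Second, $(C', [1,t])$ being \oftenavgisolated{c} gives $\sum_{w \in C'} \sum_i \outdeg_{G_i}(w, C') < ctk'$. Now if $M \ge 1$ vertices of $C'$ exceed $t(\delta + 2)$, then bounding the remaining $k' - M$ vertices below by $t(\delta - k' + 1)$ and summing yields, after dividing by $t$, the inequality $M(k'+1) < k'(c - \delta + k' - 1)$; since $c - \delta + k' - 1 \ge 1$ by \cref{thm:lemma1b} and $k'/(k'+1) < 1$, this forces $M < c - \delta + k' - 1$, i.e.\ $M \le k' - m$ (and $M = 0 \le k' - m$ trivially), as required.

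I expect the crux to be exactly this counting step — in particular the observation that the global isolation bound $\sum_{w \in C'} \sum_i \outdeg_{G_i}(w, C') < ctk'$ on its own is too weak, since a single ``bridge'' vertex of $C'$ could a priori carry almost all of the outgoing edges, leaving $M$ uncontrolled. The fix is to also use the per-vertex lower bound $\sum_i \outdeg_{G_i}(w, C') \ge t(\delta - k' + 1)$ coming from the minimum-degree/clique constraint, applied to \emph{all} of the $k' - M$ non-bridge vertices; it is this that makes the arithmetic land precisely on the bound $k' - m$ rather than something slightly larger. The remaining ingredients — the elementary bookkeeping relating $\deg_{G_i}$ and $\outdeg_{G_i}(\cdot, C')$, the invocations of \cref{thm:lemma1b,thm:lemma2b}, and the handling of ties in the ordering by summed degree — are routine.
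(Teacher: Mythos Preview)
Your proof is correct and follows essentially the same approach as the paper. Both arguments combine the per-vertex lower bound $\sum_i \outdeg_{G_i}(w,C') \ge t(\delta - k' + 1)$, the threshold $t(\delta+2)$ coming from \cref{thm:lemma2b}, and the global isolation bound $\sum_{w\in C'}\sum_i \outdeg_{G_i}(w,C') < ctk'$ to squeeze the number of ``high-degree'' vertices in $C'$; the paper packages this as a contradiction (assuming some $u \in \tilde C \setminus C'$ and deriving $\sum_i \outdeg_{G_i}(C') \ge ctk'$), while you phrase it as a direct counting bound $M \le k' - m$, but the arithmetic is identical.
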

\begin{proof}
Let $k := \abs{C}$ and $k' := \abs{C'}$.
For any $v \in C'$ and $i$, we have that
\begin{align*}
\outdeg_{G_i}(v, C') 
&\geq \outdeg_{G_\cap}(v, C')
\\ &= \outdeg_{G_\cap}(v, C) + k - k' 
\\ &\geq \delta_{G_\cap}(C) - k' + 1 =: d 
\,.
\end{align*}

Suppose for contradiction that there exists $u \in \tilde{C} \setminus C'$.
Then, for each $v \in C' \setminus \tilde{C}$, we have that
\begin{align*}	
\sum_{i=1}^t \outdeg_{G_i}(v, C')
&= \sum_{i=1}^t \left( \deg_{G_i}(v) - k' + 1 \right)
\\ &\geq \sum_{i=1}^t \left( \deg_{G_i}(u) - k' + 1 \right)
\\ &>  t \left( \delta_{G_\cap}(C)  + 2 \right) =: t\tilde{d},
\end{align*}
where the last inequality is due to \cref{thm:lemma2b}.
Furthermore, we have that 
\begin{align*}
h := \abs{C' \setminus \tilde{C}} 
&\geq \abs{C'} - (\abs{\tilde{C}} - 1) 
= k' - \delta_{G_\cap}(C) + c -1.
\end{align*}
Thus, we get that 
\begin{align*}
\sum_{i=1}^t \outdeg_{G_i}(C') &= \sum_{v \in C'} \sum_{i=1}^t \outdeg_{G_i}(v, C')
\\ &> k' td + h(t\tilde{d} - td)
\\ &= t \left( k'd + h( k' + 1) \right)
\\ &\geq t k' (d + h) 
\\ &= tk' c,
\end{align*}
contradicting the isolation of $C'$.
\end{proof}

\begin{lemma}
\label{thm:highest_degree_vertices}
Let $C$ be a clique and $C' \subseteq C$ a maximal avg-$c$-isolated subset.
Set $\tilde{C} \subseteq C$ to be the $\delta(C) - c + 2$ vertices of minimal degrees.
Then $\tilde{C} \subseteq C'$.
\end{lemma}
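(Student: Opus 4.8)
The plan is to recognize this as the single-layer specialization of \cref{thm:highest_degree_vertices_b}. Concretely, I would apply that lemma with $t = 1$ and $G_1 := G$, so that $G_\cap = G$ and $\delta_{G_\cap}(C) = \delta(C)$. Over a single layer, the predicate \oftenavgisolated{c} for $(C',[1,1])$ unwinds to $\outdeg_G(C') < c\abs{C'}$, which is exactly avg-$c$-isolation of $C'$, and vertex-maximality of $(C',[1,1])$ among subsets of $C$ is exactly maximality of $C'$ as an avg-$c$-isolated subset of $C$. The set $\tilde C$ in \cref{thm:highest_degree_vertices_b} then consists of the $\delta_{G_\cap}(C) - c + 2 = \delta(C) - c + 2$ lowest-degree vertices of $C$, so its conclusion $\tilde C \subseteq C'$ is literally the statement we want. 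If a self-contained argument is preferred, the whole proof of \cref{thm:highest_degree_vertices_b} can be transcribed with $t=1$ with no new ideas, which is what I sketch next.

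First I would record the static counterpart of \cref{thm:lemma2b}: every $v \in C$ with $\deg(v) \le \delta(C) + \abs{C'} + 1$ lies in $C'$. This follows by contraposition exactly as in \cref{thm:lemma2b}: if such a $v$ were in $C \setminus C'$, then $\outdeg(v, C) = \deg(v) - (\abs{C} - 1)$ is small enough that, expanding $\outdeg(C' \cup \set{v}) = \outdeg(C') + (\abs{C} - \abs{C'} - 1) + \outdeg(v, C) - \abs{C'}$ and using $\outdeg(C') < c\abs{C'}$ together with $\abs{C'} > \delta(C) - c + 1$ from \cref{thm:basic_lemma}, one obtains $\outdeg(C' \cup \set{v}) < c(\abs{C'}+1)$; since $v \in C$ makes $C' \cup \set{v}$ a clique contained in $C$, this contradicts the maximality of $C'$.

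With this in hand I would mirror the computation in the proof of \cref{thm:highest_degree_vertices_b}. Writing $k' := \abs{C'}$, $d := \delta(C) - k' + 1$ and $\tilde d := \delta(C) + 2$, every $v \in C'$ has $\outdeg(v, C') = \deg(v) - k' + 1 \ge d$. Assuming towards a contradiction that some $u \in \tilde C \setminus C'$ exists, the static counterpart of \cref{thm:lemma2b} applied to $u$ forces $\deg(v) \ge \deg(u) > \delta(C) + k' + 1$ for every $v \in C' \setminus \tilde C$, hence $\outdeg(v, C') > \tilde d$ there; moreover $h := \abs{C' \setminus \tilde C} \ge k' - (\abs{\tilde C} - 1) = k' - \delta(C) + c - 1$, so that $d + h \ge c$ and $\tilde d - d = k' + 1$. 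Summing out-degrees over $C'$ then gives $\outdeg(C') > k' d + h(k'+1) \ge k'(d+h) \ge c k'$, contradicting the avg-$c$-isolation of $C'$. The argument is a routine transcription rather than anything subtle; the only point needing a moment's care is the bookkeeping around maximality — checking that every set $C' \cup \set{v}$ with $v \in C$ that we construct is a genuine competitor, being a clique contained in $C$ — which is exactly the mild caveat already present in \cref{thm:lemma2b}.
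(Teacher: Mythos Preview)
Your proposal is correct and matches the paper's approach exactly: the paper's entire proof is the one-line observation that this is the special case $t=1$ of \cref{thm:highest_degree_vertices_b}. Your additional self-contained transcription is faithful and your remark about the maximality bookkeeping (that the competitor $C' \cup \{v\}$ stays inside $C$) is apt, since the proofs of \cref{thm:lemma2b} and \cref{thm:highest_degree_vertices_b} only ever enlarge $C'$ by vertices of $C$.
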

\begin{proof}
This is a special case of \cref{thm:highest_degree_vertices_b} for $t=1$.
\end{proof}

\section{Enumerating Maximal Isolated Temporal Cliques}

In this section, we present efficient algorithms to enumerate maximal isolated temporal cliques for five out of the six introduced temporal isolation concepts (all except \oftenmax{}).%
\footnote{The reader may wonder why \oftenmax{}-isolation was dropped here.
The answer is that, even though the same approach also works for \oftenmax{}-isolation,
we found no way to limit the work that would be required in the \isolatedSubsets{} subroutine significantly below $\bigO(2^n)$.}
These algorithms have \emph{fixed-parameter tractable} (FPT) running times for the isolation parameter~$c$,
that is, for fixed~$c$, the running time is a polynomial whose degree does not depend on $c$.%
\footnote{The isolation parameter~$c$ only influences the leading constant of the polynomial running time but not the degree of the polynomial,
that is, the running time is $f(c)\cdot\poly(|\TG|)$ for some function $f$.}
Formally, we show the following result, whose proof will be given in \cref{sec:correctness}.

\begin{theorem}\label{thm:FPT}
Let a temporal graph $\TG$, an isolation type $I\in \Iii \setminus \{$\oftenmax{}$\}$, and an isolation parameter $c\in\QQ$ be given,
then all maximal $I$-$c$-isolated temporal cliques in $\TG$ can be enumerated in FPT-time for the isolation parameter~$c$.
The specific running times depend on $I$ and are given in \cref{table:runningtimes}.
\end{theorem}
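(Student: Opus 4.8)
The plan is to establish \autoref{thm:FPT} by exhibiting, for each of the five isolation types $I \in \Iii \setminus \{\oftenmax{}\}$, an enumeration algorithm with the claimed running time, following the high-level structure of the static algorithm of \citet{MaxIsolatedCliques}. The key observation we would exploit is the reduction established in the preliminaries: a maximal $I$-$c$-isolated temporal clique $(C,[a,b])$ must in particular be a clique in $G_\cap := \bigcap_{i=a}^{b} G_i$, and its vertex set is constrained by the degree-based bounds of \cref{thm:lemma1b}, \cref{thm:lemma2b}, and especially \cref{thm:highest_degree_vertices_b}. The latter says that $C'$ must contain a prescribed set $\tilde C$ of low-degree vertices; this is what keeps the search polynomial. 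So the overall scheme is: iterate over candidate time intervals $[a,b]$ (there are only $\bigO(\tau^2)$ of them), and for each, iterate over candidate ``pivot'' vertices $v$ and build up the clique $C$ greedily as a neighbourhood in $G_\cap$, pruning using the size and containment lemmas, and then enumerate the relevant isolated subsets via a subroutine \isolatedSubsets{}.

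The steps I would carry out, in order, are as follows. First, I would fix the interval $[a,b]$ and reduce to enumerating maximal isolated temporal cliques on that interval; since \cref{thm:isolation_relations} fails for maximality, I would need a separate time-maximality check at the end, discarding any clique that can be extended in the time dimension. Second, within a fixed interval, I would describe how to generate candidate vertex sets: guess a vertex $v$ that will be the minimum-degree vertex of the maximal clique $C \supseteq C'$ in $G_\cap$, restrict attention to $N_{G_\cap}[v]$, and use \cref{thm:highest_degree_vertices_b} to force the inclusion of the $\delta_{G_\cap}(C)-c+2$ lowest-(total-)degree vertices; this pins down all but $\bigO(c)$ vertices of $C'$, so the remaining choices can be enumerated in time depending only on $c$ times a polynomial. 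Third, I would present the type-specific \isolatedSubsets{} subroutine: given the fixed ``core'' and the small set of free vertices, check the defining inequality of $I$ — which for the ``often'' variants is a sum over layers and for the ``alltime'' variants a maximum over layers, both computable in polynomial time once $C'$ and $[a,b]$ are fixed. Fourth, I would handle vertex-maximality (no superset on the same interval is $I$-$c$-isolated) by a direct check, and collate the per-interval, per-pivot outputs, removing duplicates. Finally, I would tally the running time for each of the five types, matching the entries of \cref{table:runningtimes}; the differences between types come from how expensive it is to verify the isolation predicate and how much freedom the predicate leaves in the \isolatedSubsets{} step.

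The main obstacle, as the paper's own footnote flags, is controlling the cost of the \isolatedSubsets{} subroutine. For most of the five types the defining inequality decomposes per-vertex or per-layer in a way that lets us determine the optimal free-vertex choices without brute force over all $2^{\bigO(c)}$ subsets — but doing this cleanly for each of the mixed concepts (\alltimeavg{}, \avgalltime{}, \maxoften{}, and the doubly-averaged \oftenavg{}) requires separate combinatorial arguments, essentially monotonicity or exchange arguments showing that including a lower-degree free vertex never hurts isolation. I expect the \maxoften{} and \alltimeavg{} cases to be the delicate ones, since there the max is taken in one dimension and the average in the other, so the greedy/exchange argument must be set up along the right axis. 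The reason \oftenmax{} is excluded is precisely that the predicate $\sum_{i=a}^{b}\max_{v\in C'}\outdeg_{G_i}(v,C')$ couples the vertex choice and the layer choice in a way that resists this decomposition, leaving only the trivial $\bigO(2^n)$ enumeration. A secondary, more routine obstacle is bookkeeping: ensuring that the interaction between the $\bigO(\tau^2)$ interval loop, the pivot loop, and the time-maximality post-processing does not introduce hidden superpolynomial factors, and that duplicates across different $(v,[a,b])$ guesses are eliminated efficiently.
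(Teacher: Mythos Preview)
Your plan is correct and follows essentially the same route as the paper: iterate over intervals $[a,b]$, apply the static isolated-clique machinery of \citet{Ito2009,MaxIsolatedCliques} to $G_\cap$ with a pivot vertex, use the low-degree-vertex lemmas to bound the number of ``free'' vertices by $c$, run a type-specific \isolatedSubsets{} subroutine, and finish with maximality checks.

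Two places where your intuition about the details is slightly off (neither breaks correctness, only relocates the work). First, the exchange/monotonicity argument you describe is not what drives \isolatedSubsets{}; for the avg-variants the paper simply branches over the at most $2^c$ (or, for \alltimeavg{}, $c^c$, since the removable set depends on the layer) choices of high-degree vertices to delete, which is already FPT --- only for \alltimemax{} and \maxoften{} is there a unique maximal isolated subset reachable by greedy deletion. The exchange argument instead lives inside the vertex-maximality test. Second, that test is not a ``direct check'' at the claimed running times: the paper enumerates maximal cliques in the common $G_\cap$-neighbourhood of $C$ (which has size $<c$), and for each such clique greedily strips the highest-degree vertex until isolation holds or the set is empty; the exchange argument is what makes this greedy strip sound. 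Only for \alltimeavg{} does the paper fall back to pairwise comparison against the result set, which is why that row of \cref{table:runningtimes} carries the extra $|V|$ factor. You should also note that for the usually-variants (\maxoften{}, \oftenavg{}) the time-maximality check must loop over all enclosing intervals $[a',b']\supseteq[a,b]$, since isolation on $[a,b]$ does not imply isolation on subintervals; this is the source of the $\tau^3$ factors.
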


\begin{table}[t]
{
\caption{Running time of our maximal isolated temporal clique enumeration algorithms for the different temporal isolation types.}\label{table:runningtimes}
\renewcommand{\arraystretch}{1.5}
\begin{tabularx}{\textwidth}{X|X|X|X|X}
	
	\alltimeavg{} & \alltimemax{} & \avgalltime{} & \maxoften{} & \oftenavg{} \\
	\hline
	$\bigO(c^c \tau^2  \cdot \abs{V} \cdot \abs{\TE})$ & 
	$\bigO( 2.89^c  c \tau \cdot \abs{\TE})$ & 
	$\bigO( 5.78^c  c \tau  \cdot \abs{\TE})$  &
	$\bigO( 2.89^c  c \tau^3 \cdot \abs{\TE})$ &
	$\bigO( 5.78^c  c \tau^3 \cdot \abs{\TE})$\\
\end{tabularx}
}
\end{table}

Our algorithms are inspired by the algorithms for static isolated clique enumeration~\cite{Ito2009,MaxIsolatedCliques} and 
build upon the fact that every maximal $I$-$c$-isolated temporal clique $(C, [a, b])$ is contained in some vertex-maximal $c$-isolated clique $C'$ of $G_\cap := \bigcap_{t=a}^b G_t$
(by \cref{thm:isolation_relations}).
\Cref{alg:noIntersections} constitutes the top level algorithm.
Here, we iterate over all possible time windows $[a, b]$ and apply the so-called trimming procedure developed by \citet{Ito2009} to $G_\cap$ to obtain,
for each so-called \emph{pivot vertex} $v$, a set $C_v \subseteq N[v]$ containing all avg-$c$-isolated cliques of~$G_\cap$ that contain $v$. 
Subsequently, we enumerate all maximal cliques within $C_v$ and test each of them for maximal $I$-$c$-isolated subsets.
For this step, we employ \cref{thm:basic_lemma,thm:highest_degree_vertices_b,thm:highest_degree_vertices} to quickly skip over irrelevant subsets.
The details depend on the choice of $I$, as does the strategy for the last step, that is, 
removing non-maximal elements from the result set.
Remember that we have to pay attention to both, time- and vertex-maximality.
For the latter we can, in most cases, adapt an idea by \citet{MaxIsolatedCliques}.

We proceed by describing the subroutines \isolatedSubsets{} (\cref{line:isolatedSubsets} of \Cref{alg:noIntersections}) 
and \isMaximal{} (\cref{line:ismaximal} of \Cref{alg:noIntersections}). 
Then we prove correctness of our algorithms and, finally, analyze their running times.


\begin{algorithm}[t]\footnotesize
\caption{Enumerating maximal $I$-$c$-isolated cliques for $I \in \Iii \setminus \{$\oftenmax{}$\}$}
\label{alg:noIntersections}
\KwIn{
A temporal graph $\TG = (V, E_1, \ldots, E_\tau)$, a $c\in\QQ$, and an isolation type $I \in \Iii \setminus \{$\oftenmax{}$\}$.
}
\KwOut{
All maximal $I$-$c$-isolated cliques in $\TG$.
}
result $\gets \set{}$\;
\ForEach{$a = 1 \dots \tau$}{
	\ForEach{$b = a \dots \tau$}{
		\tcc{Here we are looking for cliques with lifetime $[a, b]$.}
		$G_\cap \gets \bigcap_{i=a}^b G_i$\;
		Sort vertices by ascending degree in $G_\cap$\;
		\ForEach{vertex $v$} { \tcc{Vertex $v$ is the pivot vertex.}
			$C_v \gets $ candidate set for pivot $v$ after trimming stage (in $G_\cap$)\;
			$k \gets \floor{\deg_{G_\cap}(v) - c + 2}$ \tcc{By \cref{thm:basic_lemma}, all isolated cliques are at least this large.}
			$\Cc \gets $ set of all maximal cliques of size at least $k$ in $C_v \subseteq G_\cap$\;
			\ForEach{$C \in \Cc$}{
				subsets $\gets I\text{-\isolatedSubsets{$C$, $[a, b]$, $\deg_{G_\cap}(v)$}}$\;\label{line:isolatedSubsets}
				result $\gets$ result ${} \cup  \set{(C, [a, b]) ; C \in \text{subsets}}$\;
			}
		}
	}
}\label{line:endfirstpart}
\ForEach{$(C, [a, b]) \in $ result}{\label{line:startsecondpart}
	\If{$I$-\isMaximal{$C$, $[a, b]$}}{\label{line:ismaximal}
		output $(C, [a, b])$\;
	}
}
\end{algorithm}

\subsection{Enumerating Isolated Subsets}

We now discuss the \isolatedSubsets{} subroutine of \Cref{alg:noIntersections} (\cref{line:isolatedSubsets}).
While the details depend on the isolation type, there are two main flavors.
For \alltimemax{}-isolation (Function \labelcref{alg:isolatedSubsets_alltimemax}) and \maxoften{}-isolation (Function \labelcref{alg:isolatedSubsets_maxoften}), it is possible to determine a single vertex that must be removed in order to obtain an isolated subset.
By repeatedly doing so, one either reaches an isolated subset or the size threshold set by \cref{thm:basic_lemma}.
In particular, each maximal clique contains at most one maximal isolated subset.

For \oftenavg{}-isolation (Function \labelcref{alg:isolatedSubsets_oftenavg}), \avgalltime{}-isolation (Function \labelcref{alg:isolatedSubsets_avgalltime}), and \alltimeavg{}-isolation (Function \labelcref{alg:isolatedSubsets_alltimeavg}), multiple vertices are removal candidates.
However, their number is upper-bounded by \cref{thm:highest_degree_vertices_b} and \Cref{thm:highest_degree_vertices}, respectively.
We therefore build a search tree, iteratively exploring removal sets of growing size.
The case of \alltimeavg{}-isolation (Function \labelcref{alg:isolatedSubsets_avgalltime}) is somewhat special, as here the set of removal candidates is different for each layer.

\begin{function}[t]\footnotesize
\caption{alltime-avg-isolatedSubsets($C$, ${[a, b]}$, $\delta$)}
\label{alg:isolatedSubsets_alltimeavg}
$d := \floor{\abs{C} - \delta + c - 2}$ \tcc{By \Cref{thm:highest_degree_vertices}, we can only remove the top $d$ vertices.}
$\Dd' \gets \set{\emptyset}$\;
result $\gets \emptyset$\;
\While{$\Dd' \neq \emptyset$}{
	$\Dd \gets \Dd'$\;
	$\Dd' \gets \emptyset$\;
	\ForEach{$D \in \Dd$}{
		$C' \gets C \setminus D$\;
		\If{$\exists i \in [a, b]: \sum_{v \in C'} \deg_{G_i}(v)  \geq \abs{C'}\cdot (\abs{C'}-1 + c)$}{
			take $i$ to be smallest possible\;
			\If{$\abs{C'} > \delta - c + 2$}{
				$E \gets $ the $d$ vertices of $C'$ that have the highest degrees in $G_i$\;\label{alltimeavg:12}
				$\Dd' \gets \Dd' \cup \set{D \cup \set{e} ; e \in E \setminus D}$\;\label{alltimeavg:13}
			}
		}
		\Else{
			result $\gets$ result${} \cup \set{C'}$\;	
		}
	}
}
\Return{result}\;
\end{function}

\begin{function}[t]\footnotesize
\caption{alltime-max-isolatedSubsets($C$, ${[a, b]}$, $\delta$)}
\label{alg:isolatedSubsets_alltimemax}
$\forall v \in C: s_v := \max_{i \in [a, b]}\deg_{G_i}(v)$\;
$k := \floor{\delta - c + 2}$ \tcc{By \cref{thm:basic_lemma}, all isolated cliques are at least this large.}
\While{$\abs{C} \geq k$}{
	\tcc{Remove offending vertices until we either succeed or fail.}
	\If{$\exists v \in C: s_v \geq \abs{C}-1 + c$} {
		$C \gets C \setminus \set{v}$\;
	}
	\Else{
		\Return{$\set{C}$}\;
	}
}
\Return{$\emptyset$}\;
\end{function}

\begin{function}[t]\footnotesize
\caption{avg-alltime-isolatedSubsets($C$, ${[a, b]}$, $\delta$)}
\label{alg:isolatedSubsets_avgalltime}
$\forall v \in C: s_v := \max_{i \in [a, b]} \deg_{G_i}(v)$\;
$d := \floor{\abs{C} - \delta + c - 2}$ \tcc{By \Cref{thm:highest_degree_vertices}, we can only remove the top $d$ vertices.}
$\set{v_i ; 1\leq i \leq d} := $ the $d$ vertices in $C$ with the highest values of $s_v$\;
$\Dd' \gets \{\emptyset\}$\;
result $\gets \emptyset$\;
\While{$\Dd' \neq \emptyset$}{
	$\Dd \gets \Dd'$\;
	$\Dd' \gets \emptyset$\;
	\ForEach{$D \in \Dd$}{
		$C' \gets C \setminus D$\;
		\If{$\sum_{v\in C'} s_v \geq \abs{C'} \cdot (\abs{C'}-1+c)$}{
			$j := \max\set{0, i ; v_i \in D}$\;
			$\Dd' \gets \Dd' \cup \set{D \cup \{v_i\} ; j < i \leq d}$\;
		}
		\Else{
			result $\gets$ result${} \cup \set{C \setminus D}$\;
		}
	}
}
\Return{result}\;
\end{function}

\begin{function}[t]\footnotesize
\caption{max-usually-isolatedSubsets($C$, ${[a, b]}$, $\delta$)}
\label{alg:isolatedSubsets_maxoften}
$\forall v \in C: s_v := \sum_{i \in [a, b]} \deg_{G_i}(v)$\;
$k := \floor{\delta - c + 2}$ \tcc{By \cref{thm:basic_lemma}, all isolated cliques are at least this large.}
\While{$\abs{C} \geq k$}{
	\tcc{Remove offending vertices until we either succeed or fail.}
	\If{$\exists v \in C: s_v  \geq (b-a+1)(\abs{C}-1 + c)$} {
		$C \gets C \setminus \set{v}$\;
	}
	\Else{
		\Return{$\set{C}$}\;
	}
}
\Return{$\emptyset$}\;
\end{function}

\begin{function}[t]\footnotesize
\caption{usually-avg-isolatedSubsets($C$, ${[a, b]}$, $\delta$)}
\label{alg:isolatedSubsets_oftenavg}
$\forall v \in C: s_v := \sum_{i \in [a, b]} \deg_{G_i}(v)$\;
$d := \floor{\abs{C} - \delta + c - 2}$ \tcc{By \cref{thm:highest_degree_vertices_b}, we can only remove the top $d$ vertices.}
$\set{v_i; 1 \leq i \leq d} := $ the $d$ vertices in $C$ with the highest values of $s_v$\;
$\Dd' \gets \{\emptyset\}$\;
result $\gets \emptyset$\;
\While{$\Dd' \neq \emptyset$}{
	$\Dd \gets \Dd'$\;
	$\Dd' \gets \emptyset$\;
	\ForEach{$D \in \Dd$}{
		$C' \gets C \setminus D$\;
		\If{$\sum_{v\in C'} s_v \geq (b-a+1) \cdot \abs{C'} \cdot (\abs{C'}-1 + c)$}{
			$j := \max\set{0, k ; v_k \in D}$\;
			$\Dd' \gets \Dd' \cup \set{D \cup \{v_k\} ; j < k \leq d}$\;
		}
		\Else{
			result $\gets$ result${} \cup \set{C \setminus D}$\;
		}
	}
}
\Return{result}\;
\end{function}

\subsection{Checking for Maximality}

We now discuss the \isMaximal{} subroutine of \Cref{alg:noIntersections} (\cref{line:ismaximal}), which in turn uses an \isVertexMaximal{} subroutine.
Note that, while each temporal clique $(C, [a, b])$ returned by \isolatedSubsets{} is vertex-maximal within its respective set $C_v$, it may be not vertex-maximal with regard to the entire graph.
Moreover, we need to check for maximality with regard to cliques with a larger time window.
The naive approach of pairwise comparing all elements of the result set is feasible but inefficient.
Instead, for \alltimemax{}-isolation, \alltimeavg{}-isolation, and \avgalltime{}-isolation it is sufficient to only check whether the time window can be extended in either direction (see Function \labelcref{alg:testMaximal_alltime}), and whether a larger clique exists within the same time window (i.e., checking vertex-maximality).
Except for the case of \alltimeavg{}-isolation (Function~\labelcref{alg:vertexMaxDumb}), 
the latter can again be implemented more efficiently than by using pairwise comparisons (Function~\labelcref{alg:vertexMaxClever}).
We modify the maximality test developed by \citet{MaxIsolatedCliques} which searches for cliques within the common neighborhood of $C$ and then checks whether these can be used to build a larger isolated clique.

This modified vertex-maximality test also works for the cases of \maxoften{}-isolation and \oftenavg{}-isolation, but here we cannot avoid checking all time windows
because isolation of $(C, [a, b])$ does not imply isolation of, say, $(C, [a, b-1])$ (Function \labelcref{alg:testMaximal_usually}).

\begin{function}[t]\footnotesize
\caption{$I$-isMaximal($C$, ${[a, b]}$) - version for $I \in \{\alltimeavg{},$ $\alltimemax{},$ $\avgalltime{}\}$}
\label{alg:testMaximal_alltime}
\For{$(a', b') \in \set{(a-1, b), (a, b+1)}$}{
	\If{$(C, [a', b'])$ is isolated clique}{
		\Return{$\false$}\;
	}
}
\Return{$I$-\isVertexMaximal{$C$, $[a, b]$}}\;
\end{function}

\begin{function}[t]\footnotesize
\caption{$I$-isMaximal($C$, ${[a, b]}$) - version for $I \in \{\maxoften{}, \oftenavg{}\}$}
\label{alg:testMaximal_usually}
\For{$a' = a...1$}{
	\If{$C$ is not a clique in $G_{a'}$}{
		break\;
	}
	\For{$b' = b...\tau$}{
		\If{$C$ is not a clique in $G_{b'}$}{
			break the inner loop\;
		}
		\If{$(C, [a', b'])$ $I$-$c$-isolated and $(a, b) \neq (a', b')$}{
			\Return{$\false$}\;
		}
		\If{not $I$-\isVertexMaximal{$C$, $[a', b']$}}{
			\Return{$\false$}\;
		}
	}
}
\Return{$\true$}\;
\end{function}

\begin{function}[t]\footnotesize
\caption{alltime-avg-isVertexMaximal($C$, ${[a, b]}$)}\label{alg:vertexMaxDumb}
\If{the result set contains any $(C', [a, b])$ with $C' \supset C$}{
	\Return{$\false$}
}
\Return{$\true$}
\end{function}

\begin{function}[t]\footnotesize
\caption{$I$-isVertexMaximal($C$, ${[a, b]}$) - version for $I \in \{\alltimemax{}, \avgalltime{}, \maxoften{}, \oftenavg{}\}$}\label{alg:vertexMaxClever}
$G_\cap := \bigcap_{i=a}^b G_i$\;
$w := \argmin_{v \in C}(\deg_{G_\cap}(v))$ \tcc{$w$ is the pivot of $(C, [a, b])$}
$S := \set{v \in N_{G_\cap}(w) \setminus C ; N_{G_\cap}(v) \supseteq C}$\;
$\Dd := $ set of all maximal cliques within $S \subset G_\cap$\;
\For{$D \in \Dd$}{
	\While{$(C \cup D, [a, b])$ not isolated}{
		\If{$I \in \{\maxoften{}, \oftenavg{}\}$}{
			$d \gets \argmax_{v \in D}(\sum_{i=a}^b \deg_{G_i}(v))$
		}
		\ElseIf{$I \in \{\alltimemax{}, \avgalltime{}\}$}{
			$d \gets \argmax_{v \in D}(\max_{i=a}^b \deg_{G_i}(v))$
		}
		$D \gets D \setminus \{d\}$\;
	}
	\If{$D \neq \emptyset$}{
		\Return false\;
	}
}
\Return true\;
\end{function}

\subsection{Correctness}\label{sec:correctness}

We now show the correctness of our algorithms. We first prove that the \isolatedSubsets{} functions (Functions \labelcref{alg:isolatedSubsets_alltimemax,alg:isolatedSubsets_maxoften,alg:isolatedSubsets_oftenavg,alg:isolatedSubsets_alltimeavg,alg:isolatedSubsets_avgalltime}) behave as intended. 

\begin{lemma}\label{thm:correctness_isolatedSubsets}
	Let $\TG = (V, E_1, \dots, E_\tau)$ be a temporal graph, $c\in\QQ$, and $I \in \Iii \setminus \{$\oftenmax{}$\}$. Let $C$ be a clique in $G_\cap := \bigcap_{i=a}^b G_i$ and $\delta = \delta_{G_\cap}(C)$.
	Then I-\isolatedSubsets{$C$, $[a, b]$, $\delta$} returns all maximal sets $\tilde{C} \subseteq C $ such that $(\tilde{C}, [a, b])$ is $I$-$c$-isolated.
\end{lemma}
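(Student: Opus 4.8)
The plan is to treat the two families of \isolatedSubsets{} routines separately: the \emph{greedy} ones (Functions~\labelcref{alg:isolatedSubsets_alltimemax} and~\labelcref{alg:isolatedSubsets_maxoften}, for \alltimemax{}- and \maxoften{}-isolation), and the \emph{search-tree} ones (Functions~\labelcref{alg:isolatedSubsets_oftenavg},~\labelcref{alg:isolatedSubsets_avgalltime}, and~\labelcref{alg:isolatedSubsets_alltimeavg}, for \oftenavg{}-, \avgalltime{}-, and \alltimeavg{}-isolation). Two observations are used throughout. First, since every $\tilde C\subseteq C$ is a clique in each layer $G_i$ ($i\in[a,b]$), we have $\outdeg_{G_i}(v,\tilde C)=\deg_{G_i}(v)-(\abs{\tilde C}-1)$ for $v\in\tilde C$; substituting this into the six inequalities of the Definition turns each of them into exactly the numerical condition tested by the corresponding routine (e.g.\ \alltimemaxisolated{c} of $(\tilde C,[a,b])$ becomes $\max_{v\in\tilde C}s_v<\abs{\tilde C}-1+c$ with $s_v=\max_{i\in[a,b]}\deg_{G_i}(v)$, and similarly for the others). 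Second, by \cref{thm:isolation_relations} every $I$-$c$-isolated $\tilde C\subseteq C$ is avg-$c$-isolated in $G_\cap$, so by \cref{thm:basic_lemma} it satisfies $\abs{\tilde C}>\delta-c+1$, i.e.\ $\abs{\tilde C}\ge\floor{\delta-c+2}$; this is the threshold $k$ below which the routines stop. (\cref{thm:lemma1b,thm:lemma2b,thm:highest_degree_vertices_b} are stated for the window $[1,t]$ but apply verbatim to $[a,b]$ after relabelling layers $a,\dots,b$ as $1,\dots,b-a+1$.)

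For the greedy routines I would argue that the vertex removed in each iteration is \emph{forced}. If $v$ meets the offending condition checked by the routine, then by the substitution above the quantity the routine bounds for $v$ can only grow when further vertices of the current set are deleted, while the threshold is fixed; hence every subset of the current set containing $v$ fails $I$-$c$-isolation, and since (by the maintained invariant) the current set already contains every $I$-$c$-isolated subset of $C$, deleting $v$ preserves this invariant. The loop terminates since $\abs{C}$ strictly decreases. If it halts with no offending vertex left, the current set $C^{\star}$ is itself $I$-$c$-isolated and contains all $I$-$c$-isolated subsets of $C$, so it is the unique maximal one and is correctly returned as $\set{C^{\star}}$; if it halts with $\abs{C}<k$, the size bound shows no $I$-$c$-isolated subset exists and $\emptyset$ is returned. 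This also confirms the claim that each maximal clique contains at most one maximal $I$-$c$-isolated subset.

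For the search-tree routines I would show (i) everything added to \texttt{result} is $I$-$c$-isolated (immediate, since a set is added only after explicitly passing the isolation test), and (ii) every maximal $I$-$c$-isolated subset $\hat C\subseteq C$ is returned. For (ii), put $D^{\star}:=C\setminus\hat C$; I claim the search eventually makes $D^{\star}$ the current removal set, whence $C\setminus D^{\star}=\hat C$ passes the test and is returned. For \oftenavg{} the bound $D^{\star}\subseteq\set{v_1,\dots,v_d}$ with $d=\floor{\abs{C}-\delta+c-2}$ follows from \cref{thm:highest_degree_vertices_b} (whose proof enlarges $\hat C$ only by vertices of $C$, so ``vertex-maximal'' may be read as ``maximal within $C$''); for \avgalltime{} it follows from \cref{thm:highest_degree_vertices} applied to an auxiliary static graph in which $C$ is a clique and $\deg(v)=s_v=\max_{i}\deg_{G_i}(v)$ (which exists since $s_v\ge\abs{C}-1$), using that avg-alltime-$c$-isolation of a subset equals avg-$c$-isolation there and $\min_{v\in C}s_v\ge\delta$. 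In both cases the increasing-index enumeration of removal sets reaches $D^{\star}$ by induction: the empty prefix enters $\Dd$, and whenever a proper prefix $P$ of $D^{\star}$ is reached, $C\setminus P\supsetneq\hat C$ fails the isolation test (by maximality of $\hat C$), so $P$ is extended by its next member. For \alltimeavg{} the branch candidates at a removal set $D$ are instead the $d$ vertices of $C'=C\setminus D$ of highest $\deg_{G_i}$, where $G_i$ is the smallest layer in which $C'$ fails avg-$c$-isolation; here $\hat C$ is avg-$c$-isolated in $G_i$, so choosing a maximal (within $C'$) avg-$c$-isolated subset $\hat C^{+}\supseteq\hat C$ of $C'$ in $G_i$ and applying \cref{thm:highest_degree_vertices} to the clique $C'$ in $G_i$ (with $\delta_{G_i}(C')\ge\delta$) puts the nonempty set $C'\setminus\hat C^{+}\subseteq D^{\star}\setminus D$ among those $d$ candidates, so $D$ can be extended by a member of $D^{\star}$ and iterating reaches $D^{\star}$; the guard $\abs{C'}>\delta-c+2$ never blocks this, as any proper superset of $\hat C$ has size $>\delta-c+2$ while a guard failure anyway certifies via \cref{thm:basic_lemma} that $C'$ has no $I$-$c$-isolated proper subset.

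I expect the search-tree case, and within it the \alltimeavg{} routine, to be the main obstacle: one must verify precisely that the layer-dependent, floor-rounded branching sets always contain $C\setminus\hat C$, which requires carefully matching the constants of \cref{thm:highest_degree_vertices_b,thm:highest_degree_vertices} against the floors in the pseudocode and arguing that focusing on a single witnessing layer (via the extension $\hat C^{+}$) loses nothing.
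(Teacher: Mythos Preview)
Your proposal is correct and follows essentially the same approach as the paper's proof: the greedy cases (\alltimemax{}, \maxoften{}) are handled by showing the removed vertex is forced, the search-tree cases \oftenavg{} and \avgalltime{} via \cref{thm:highest_degree_vertices_b} and the auxiliary static graph for \cref{thm:highest_degree_vertices} respectively, and the \alltimeavg{} case by passing to a maximal avg-$c$-isolated superset $\hat C^{+}$ in the witnessing layer and applying \cref{thm:highest_degree_vertices} there. Your write-up is somewhat more explicit than the paper's (notably the substitution $\outdeg_{G_i}(v,\tilde C)=\deg_{G_i}(v)-(\abs{\tilde C}-1)$ and the increasing-index induction for reaching $D^\star$), but the underlying argument is the same.
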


\begin{proof}
	For the sake of brevity, we will simply write that some set $X \subseteq C$ is, say, \alltimeavg{}-isolated to denote that $(X, [a, b])$ is \alltimeavg{}-isolated.

	\textbf{Case 1: $I = $ \alltimeavg{} (Function \labelcref{alg:isolatedSubsets_alltimeavg}).}
	Let $\tilde{C} \subseteq C$ be any maximal subset which is \alltimeavgisolated{c},
	and suppose the algorithm is currently checking $C'$ with $\tilde{C} \subset C' \subseteq C$.
	Let~$i \in [a, b]$ be the first layer in which $C'$ is not avg-c-isolated.
	By \cref{thm:basic_lemma} we have that $\abs{C'} \geq \abs{\tilde{C}} + 1 > \delta(C) - c + 2$, thus the algorithm executes \cref{alltimeavg:12,alltimeavg:13}.
	Note that $\tilde{C}$ is avg-c-isolated in layer $i$, and let $\tilde{C}' \supseteq \tilde{C}$ be a maximal avg-c-isolated superset.
	Clearly $\tilde{C} \subseteq \tilde{C}' \subset C'$.
	By \Cref{thm:highest_degree_vertices}, we have that $C' \setminus \tilde{C}'$ is a subset of the set $E$ containing the $d$~highest-degree vertices of $C'$ in layer $i$.
	Consequently, the algorithm will add some set $C'' \subset C'$ with $C'' \supseteq \tilde{C}' \supseteq \tilde{C}$ to~$\Dd'$.
	By recursively applying the same argument to $C''$, we deduce that the algorithm will at some point reach $\tilde{C}$.

	\textbf{Case 2: $I = $ \alltimemax{} (Function \labelcref{alg:isolatedSubsets_alltimemax}).}
	By \cref{thm:basic_lemma} and \cref{thm:isolation_relations} we have that all \alltimemaxisolated{c} subsets of $C$ have at least size $k$.
	If $C$ contains an \alltimemaxisolated{c} subset $\tilde{C}$, then $\tilde{C}$ by definition does not contain any vertex $v$ with $s_v \geq \abs{\tilde{C}} - 1 + c$.
	Thus, by removing such vertices, we either reach the unique maximal \alltimemaxisolated{c} subset $\tilde{C}$ of~$C$, or, if we reach size $k$, may conclude that no such subset exists.
	
	\textbf{Case 3: $I = $ \avgalltime{} (Function \labelcref{alg:isolatedSubsets_avgalltime}).}
	Let $B := \set{v_i; 1 \leq i \leq d}$ and let $\tilde{C} \subseteq C$ be any maximal \avgalltimeisolated{c} subset.
	Note that any subset of $C$ is \avgalltimeisolated{c} if and only if the same set was avg-$c$-isolated in a static graph where each vertex' degree was set to $\max_{i\in [a, b]} \deg_{G_i}(v)$.
	By applying \Cref{thm:highest_degree_vertices} to this auxiliary graph, we see that $\tilde{C}$ must contain $C \setminus B$.
	Thus we observe analogously to Case 1 that the algorithm will at some point reach $\tilde{C}$.
	
	\textbf{Case 4: $I = $ \maxoften{} (Function \labelcref{alg:isolatedSubsets_maxoften}).}
	Works analogously to Case 2.
	
	\textbf{Case 5: $I = $ \oftenavg{} (Function \labelcref{alg:isolatedSubsets_oftenavg}).}
	Let $B := \set{v_i; 1 \leq i \leq d}$.
	By \cref{thm:highest_degree_vertices_b} we have that any maximal \oftenavgisolated{c} subset of $C$ must contain $C \setminus B$.
	Note that the loop generates all possible sets $D \subseteq B$ except those, for which $C \setminus B'$ has already found to be \oftenavgisolated{c} for some $B' \subset B$.
	Therefore, all maximal \oftenavgisolated{c} subsets of $C$ are added to the result set.
\end{proof}

Next, we prove that the function \isVertexMaximal{} (Functions \labelcref{alg:vertexMaxClever,alg:vertexMaxDumb}) behaves as intended.

\begin{lemma}\label{thm:correctness_isVertexMaximal}
	Let $\TG = (V, E_1, \dots, E_\tau)$ be a temporal graph, let $c\in\QQ$, and let $I \in \Iii \setminus \{$\oftenmax{}$\}$.
	Let $(C, [a, b])$ be an $I$-$c$-isolated clique in $\TG$.
	Then $I$-\isVertexMaximal{$C$, $[a, b]$} returns \true{} if and only if $(C, [a, b])$ is a vertex-maximal $I$-$c$-isolated clique.
\end{lemma}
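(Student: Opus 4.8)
The plan is to handle the clever version (Function~\labelcref{alg:vertexMaxClever}) and the dumb version (Function~\labelcref{alg:vertexMaxDumb}) separately, since the latter is essentially by definition. For the dumb version (used for \alltimeavg{}-isolation), vertex-maximality of $(C,[a,b])$ means exactly that there is no $I$-$c$-isolated clique $(C',[a,b])$ with $C' \supset C$; the function checks precisely this against the result set, so the only thing to verify is that the result set, at the time \isMaximal{} runs, contains every such $(C',[a,b])$. This follows because the first phase of \cref{alg:noIntersections} (lines up to~\cref{line:endfirstpart}) runs to completion before the maximality-checking phase begins, and \cref{thm:correctness_isolatedSubsets} together with the trimming-stage guarantee ensures every maximal $I$-$c$-isolated subset of every maximal clique in every $C_v$ is collected; any $I$-$c$-isolated $(C',[a,b])$ is avg-$c$-isolated in $G_\cap$ (by \cref{thm:isolation_relations}), hence lies inside $C_v$ for its pivot $v = \argmin_{u\in C'}\deg_{G_\cap}(u)$, hence is contained in some maximal clique of $C_v$, hence some superset of it is in the result set — which is enough.

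For the clever version, the key structural claim is: if $(C,[a,b])$ is $I$-$c$-isolated but \emph{not} vertex-maximal, then there is a vertex $u \in N_{G_\cap}(w)\setminus C$ with $N_{G_\cap}(u)\supseteq C$ (i.e.\ $u\in S$) such that $(C\cup\{u\},[a,b])$ is a clique in $G_\cap$, where $w$ is the pivot of $C$. Indeed, let $(C^+,[a,b])$ be an $I$-$c$-isolated clique with $C^+\supsetneq C$. Then $C^+$ is a clique in $G_\cap$ containing $C$, so every vertex of $C^+\setminus C$ is in the common neighborhood of $C$; moreover $w$ still has minimum $G_\cap$-degree inside $C^+$ (degrees only grow when passing to a superclique, and $\deg_{G_\cap}(w)$ was minimum in $C$), so in fact the argument can be centered at $w$ — at least one vertex of $C^+\setminus C$ is adjacent to $w$ in $G_\cap$, hence lies in $S$. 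So $S$ is nonempty and $C^+\setminus C$ is a clique contained in $S$, hence contained in some maximal clique $D\in\Dd$.

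The remaining, and I expect main, obstacle is the \textbf{correctness of the inner shrinking loop}: for a maximal clique $D$ of $S$, the loop removes from $D$, one at a time, the vertex of largest ``weight'' ($\sum_{i=a}^b\deg_{G_i}$ for \maxoften{}/\oftenavg{}, $\max_{i\in[a,b]}\deg_{G_i}$ for \alltimemax{}/\avgalltime{}) until $(C\cup D,[a,b])$ becomes $I$-$c$-isolated, and concludes non-maximality iff a nonempty $D$ survives. I need to argue this greedy choice is safe, i.e.\ that if \emph{some} nonempty subset $D''\subseteq D$ makes $C\cup D''$ isolated, then the greedily-shrunk set is also nonempty. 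This is exactly the monotonicity property underlying the \isolatedSubsets{} subroutines: for the four relevant isolation types, a set is isolated iff it is isolated in an appropriate auxiliary static graph (with vertex degrees replaced by the weights $s_v$), and there \cref{thm:highest_degree_vertices} / \cref{thm:basic_lemma}-style reasoning shows that the obstruction to isolation can always be removed by deleting a highest-weight vertex — so greedily removing highest-weight vertices never ``overshoots'' past an isolated set of the same size. Running this together with the structural claim: if $(C,[a,b])$ is vertex-maximal, no $D\in\Dd$ can be shrunk to a nonempty isolated extension (else that extension would witness non-maximality, using $N_{G_\cap}(u)\supseteq C$ for all $u\in S$ to see $C\cup D$ is a clique), so the function returns \true{}; conversely, if not vertex-maximal, the structural claim supplies a $D$ containing a nonempty $D''$ with $C\cup D''$ isolated, and by the greedy-safety argument the loop leaves $D$ nonempty, so the function returns \false{}. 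I would also note in passing that each $C\cup D$ considered is genuinely a clique in $G_\cap$ (since $D\subseteq S$ and all of $S$ is joined to all of $C$, and $D$ is a clique), which is what lets ``isolated in $G_\cap$'' be meaningful here; isolation of $(C\cup D'',[a',b'])$ for the exact window $[a,b]$ — not a sub-window — is what the \isMaximal{} caller needs, and that is what the loop tests directly.
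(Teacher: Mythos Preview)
Your approach matches the paper's: split on the two implementations, argue the dumb version by completeness of the result set, and for the clever version establish $C^+\setminus C\subseteq S$ and then a greedy-safety claim for the shrinking loop.

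Two small points to tighten. First, the aside that ``$w$ still has minimum $G_\cap$-degree inside $C^+$ (degrees only grow when passing to a superclique)'' is both confused (degrees in $G_\cap$ do not depend on the clique) and unnecessary: every $v\in C^+\setminus C$ is adjacent to $w$ simply because $w\in C\subset C^+$ and $C^+$ is a clique, so $C^+\setminus C\subseteq S$ follows directly. Second, the greedy-safety justification via \cref{thm:highest_degree_vertices}/\cref{thm:basic_lemma} does not quite land---those lemmas concern removing vertices from one fixed clique, not removing only from the extension $D$ while keeping $C$ fixed. The paper's argument is cleaner and more direct: given an $I$-$c$-isolated $C'\supset C$ with $C'\subseteq C\cup D$, set $x:=|D\setminus C'|<|D|$ and let $X$ be the first $x$ vertices the loop removes from $D$. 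Then $|C\cup(D\setminus X)|=|C'|$, and since $X$ consists of the $x$ highest-weight vertices of $D$, both $\max_{v\in D\setminus X}s_v\le\max_{v\in D\cap C'}s_v$ and $\sum_{v\in D\setminus X}s_v\le\sum_{v\in D\cap C'}s_v$; combined with the unchanged contribution of $C$, this gives that $(C\cup(D\setminus X),[a,b])$ is at least as $I$-isolated as $(C',[a,b])$ for each of the four types. Hence the loop halts with $D$ nonempty.
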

\begin{proof}
	
	\textbf{Case 1: $I = $ \alltimeavg{} (Function~\labelcref{alg:vertexMaxDumb}).}
	In this case the algorithm simply performs a pairwise comparison of all cliques in this time window and is thus trivially correct.
	
	\textbf{Case 2: $I \in \{$\alltimemax, \avgalltime, \maxoften, \oftenavg$\}$ (Function~\labelcref{alg:vertexMaxClever}).}
	If the algorithm returns \false, then it has found a larger $I$-$c$-isolated clique $(C \cup D, [a, b])$.
	So suppose conversely that there is $C' \supset C$ for which $(C', [a, b])$ is an $I$-$c$-isolated clique.
	Then clearly $C' \subseteq C \cup S$ and thus also $C' \subseteq C \cup D$ for some $D \in \Dd$.
	Let $x := \abs{D \setminus C'} < \abs{D}$ and let $X \subset D$ be the set of the first $x$ vertices that the algorithm removes from $D$.
	Then it is not difficult to check for each of the four isolation types in question
	that $(C \cup D \setminus X, [a, b])$ is as least as $I$-isolated as $(C', [a, b])$.
	Thus the algorithm will not remove more than $x$ vertices from $D$ and instead return \false.
\end{proof}

Lastly, we show that the function \isMaximal{} (Functions \labelcref{alg:testMaximal_alltime,alg:testMaximal_usually}) behaves as intended.

\begin{lemma}\label{thm:correctness_isMaximal}
	Let $\TG = (V, E_1, \dots, E_\tau)$ be a temporal graph, let $c\in\QQ$, and let $I \in \Iii \setminus \{$\oftenmax{}$\}$. Let $(C, [a, b])$ be a $I$-$c$-isolated clique in $\TG$.
	Then $I$-\isMaximal{$C$, $[a, b]$} returns \true{} if and only if $(C, [a, b])$ is a maximal $I$-$c$-isolated clique.
\end{lemma}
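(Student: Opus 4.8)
The plan is to prove \cref{thm:correctness_isMaximal} by combining the two previously established subroutine correctness results (\cref{thm:correctness_isolatedSubsets,thm:correctness_isVertexMaximal}) with a case distinction mirroring the two versions of \isMaximal{}. Throughout, fix an $I$-$c$-isolated clique $(C, [a, b])$; by \cref{thm:correctness_isVertexMaximal} we may treat the \isVertexMaximal{} calls as exact oracles for vertex-maximality within a given time window, so the only thing left to verify is that the time-window handling in each version of \isMaximal{} is sound and complete.

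For Function~\labelcref{alg:testMaximal_alltime}, i.e.\ $I \in \{\alltimeavg{}, \alltimemax{}, \avgalltime{}\}$, the key observation is a monotonicity property of these three isolation types with respect to shrinking the time window: if $(C, [a', b'])$ is $I$-$c$-isolated for some $[a',b'] \supseteq [a,b]$, then $(C, [a, b])$ is $I$-$c$-isolated as well. This is immediate from \cref{thm:isolation_relations}-style reasoning — the defining quantity for each of these types is a maximum (or a sum bounded via a maximum) of per-layer outdegrees over the window, which only decreases when the window shrinks, while the required bound $c\cdot\abs{C}$ (resp.\ $c$) does not depend on the window length. Consequently, $(C,[a,b])$ fails to be time-maximal if and only if it can be extended by one layer on the left or on the right, i.e.\ $(C,[a-1,b])$ or $(C,[a,b+1])$ is $I$-$c$-isolated; an iterated application shows that if no single-step extension is isolated then no larger window is. Hence Function~\labelcref{alg:testMaximal_alltime} correctly returns \false{} exactly when the clique is not time-maximal, and otherwise defers to \isVertexMaximal{}; combining time-maximality and vertex-maximality gives maximality, so this case is done.

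For Function~\labelcref{alg:testMaximal_usually}, i.e.\ $I \in \{\maxoften{}, \oftenavg{}\}$, the above monotonicity fails (as noted in the text, isolation of $(C,[a,b])$ need not imply isolation of $(C,[a,b-1])$), so we argue directly that the nested loop enumerates exactly the relevant candidate windows. First, for $(C,[a',b'])$ to be a clique at all we need $C$ to be a clique in every layer $G_i$ with $i \in [a',b']$; the \texttt{break} statements ensure the loops skip over precisely the windows $[a', b']$ containing a layer in which $C$ is not a clique, and visit every other window $[a',b']$ with $a' \le a \le b \le b'$. For each such visited window the algorithm returns \false{} if $(C,[a',b'])$ is itself $I$-$c$-isolated and strictly larger, or if $(C,[a',b'])$ is not vertex-maximal (the latter via \isVertexMaximal{}, correct by \cref{thm:correctness_isVertexMaximal}). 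It returns \true{} only after all these checks pass. I then need to verify both directions: (i) if $(C,[a,b])$ is maximal, every check passes, because any witness to non-maximality would be a clique $(C', [a',b'])$ with $C' \supseteq C$, $[a',b'] \supseteq [a,b]$ and the inclusions not both equalities, which — since $C' \supseteq C$ forces $C$ to be a clique in all layers of $[a',b']$ — corresponds to a window actually visited by the loop and would trigger a \false; (ii) conversely, if some check fails, it exhibits either a strictly larger isolated window or (via the vertex-maximality oracle) a strictly larger isolated clique in some window $[a',b'] \supseteq [a,b]$, and in either case $(C,[a,b])$ is not maximal.

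I expect the main obstacle to be the $\maxoften{}$/$\oftenavg{}$ case, specifically pinning down that the loop's \texttt{break}-based pruning does not skip a window that could witness non-maximality: one must argue that any time-extension $[a',b'] \supseteq [a,b]$ relevant for maximality necessarily has $C$ a clique in all its layers (which holds because a non-clique layer already breaks isolation, indeed breaks being a temporal clique), so such windows are exactly the ones the loop reaches before a \texttt{break}. The $\{\alltimeavg{},\alltimemax{},\avgalltime{}\}$ case is more routine once the one-layer-shrinking monotonicity is stated cleanly, though it is worth being careful that for $\alltimeavg{}$ the relevant quantity $\max_{i\in[a,b]}\sum_{v\in C}\outdeg_{G_i}(v,C)$ is indeed monotone under window restriction — it is, trivially, as a maximum over a smaller index set.
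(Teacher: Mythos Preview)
Your proof is correct and follows essentially the same approach as the paper's: for the three alltime types you use monotonicity of the isolation condition under window shrinking to reduce time-maximality to single-layer extensions, and for the two usually types you argue that the nested loops (with the \texttt{break} pruning) visit exactly the windows $[a',b'] \supseteq [a,b]$ in which $C$ remains a clique, deferring to \isVertexMaximal{} for each. Your treatment is in fact considerably more careful and explicit than the paper's (which dispatches Case~2 in a single sentence); the only superfluous ingredient is the reference to \cref{thm:correctness_isolatedSubsets}, which plays no role here.
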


\begin{proof}
	\textbf{Case 1: $I \in \{$\alltimemax{}, \alltimeavg{}, \avgalltime{}$\}$ (Function \labelcref{alg:testMaximal_alltime}).}
	By \cref{thm:correctness_isVertexMaximal} it only remains to show that the function returns \false{} if there exists an $I$-$c$-isolated clique $(C', [a', b'])$ with $a' < a$ or $b' > b$.
	Suppose without loss of generality that $a' < a$.
	Then $[a-1, b] \subseteq [a', b']$ and thus $(C', [a-1, b])$ is $I$-$c$-isolated.
	
	\textbf{Case 2: $I \in \{$\maxoften{}, \oftenavg{}$\}$ (Function \labelcref{alg:testMaximal_usually}).}
	Since the function systematically tries all possible time windows $[a', b'] \subseteq [a, b]$ for which $C$ is a clique, the correctness follows from \cref{thm:correctness_isVertexMaximal}.
\end{proof}

Now we have all the necessary pieces to prove the correctness of \Cref{alg:noIntersections}.

\begin{prop}[Correctness of \Cref{alg:noIntersections}]\label{prop:correctness}
	Let $\TG = (V, E_1, \dots, E_\tau)$ be a temporal graph, let $c\in\QQ$, and let $I \in \Iii \setminus \{$\oftenmax{}$\}$.
	Then, \Cref{alg:noIntersections} outputs exactly all maximal $I$-$c$-isolated temporal cliques.
\end{prop}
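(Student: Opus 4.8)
The plan is to show the two inclusions: every clique output by the algorithm is a maximal $I$-$c$-isolated temporal clique, and every maximal $I$-$c$-isolated temporal clique is output. The second direction is where the real work lies; the first is essentially immediate from \cref{thm:correctness_isMaximal}, since the final filtering loop (\cref{line:startsecondpart}--\cref{line:ismaximal}) only outputs $(C, [a,b])$ when $I$-\isMaximal{$C$, $[a,b]$} returns \true{}, which by \cref{thm:correctness_isMaximal} happens exactly when $(C,[a,b])$ is a maximal $I$-$c$-isolated clique. (One should also note that every pair added to \textit{result} in the first part is in fact an $I$-$c$-isolated clique with lifetime $[a,b]$; this follows from \cref{thm:correctness_isolatedSubsets}, guaranteeing that \isMaximal{} is only ever called on genuine $I$-$c$-isolated cliques, as its correctness lemma requires.)

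For the forward direction, let $(C,[a,b])$ be an arbitrary maximal $I$-$c$-isolated temporal clique. By \cref{thm:correctness_isMaximal} it suffices to show that $(C,[a,b])$ is added to \textit{result} during the iteration of the outer two loops with parameters $a$ and $b$. First I would argue that $C$ is a clique in $G_\cap := \bigcap_{i=a}^b G_i$: this is because $(C,[a,b])$ is a temporal ($0$-)clique, so $C$ is a clique in every layer $G_i$ with $i \in [a,b]$. Next, by \cref{thm:isolation_relations}, $(C,[a,b])$ being $I$-$c$-isolated implies $C$ is avg-$c$-isolated in $G_\cap$. Now I invoke the correctness of Ito's trimming stage: there is some pivot vertex $v$ such that the candidate set $C_v$ produced by trimming contains $C$; moreover $C$ is then contained in some maximal clique $C^* \subseteq C_v$ of $G_\cap$. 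I need $C^*$ to be of size at least $k = \floor{\deg_{G_\cap}(v) - c + 2}$ so that it lands in $\mathcal{C}$; this follows from \cref{thm:basic_lemma} applied to the avg-$c$-isolated set $C \subseteq C^*$, giving $\abs{C} > \delta_{G_\cap}(C^*) - c + 1 \geq \deg_{G_\cap}(v) - c + 1$, hence $\abs{C^*} \geq \abs{C} \geq k$.

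It remains to show $C$ is returned by $I$-\isolatedSubsets{$C^*$, $[a,b]$, $\deg_{G_\cap}(v)$}. Here I would apply \cref{thm:correctness_isolatedSubsets}, which says this call returns all \emph{maximal} subsets $\tilde{C} \subseteq C^*$ with $(\tilde{C},[a,b])$ $I$-$c$-isolated. Since $(C,[a,b])$ is vertex-maximal as an $I$-$c$-isolated clique in the whole graph, it is a fortiori a maximal $I$-$c$-isolated subset of $C^*$, so it is among the returned sets and is thus added to \textit{result}. The main obstacle I anticipate is handling the precise interface with the trimming procedure of \citet{Ito2009}: the claim ``some pivot $v$ has $C \subseteq C_v$'' relies on properties of the trimming stage that are cited rather than reproved here, so I would need to state carefully (and cite) the invariant that, for every avg-$c$-isolated clique $X$ in a static graph, trimming with the minimum-degree vertex of $X$ as pivot retains all of $X$. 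A secondary subtlety worth a sentence: one must make sure the $\delta$ argument passed to \isolatedSubsets{} (namely $\deg_{G_\cap}(v)$, which is $\delta_{G_\cap}(C_v)$ since $v$ is the lowest-degree pivot) is compatible with the hypotheses of \cref{thm:correctness_isolatedSubsets}, which is stated for $\delta = \delta_{G_\cap}(C)$ — but since the trimming invariants and \cref{thm:highest_degree_vertices_b,thm:highest_degree_vertices} are monotone in $\delta$ in the right direction, passing the pivot degree (a lower bound on $\delta_{G_\cap}(C^*)$) only enlarges the search and preserves completeness.
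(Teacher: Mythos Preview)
Your proof is correct and follows essentially the same approach as the paper's: both establish soundness via \cref{thm:correctness_isolatedSubsets} and \cref{thm:correctness_isMaximal}, and completeness by reducing to avg-$c$-isolation in $G_\cap$ via \cref{thm:isolation_relations}, invoking Ito's trimming guarantee to place $C$ inside some $C_v$, and then applying \cref{thm:basic_lemma} and \cref{thm:correctness_isolatedSubsets}. You are in fact more explicit than the paper on two points it glosses over: the reason $C$ is a \emph{maximal} $I$-$c$-isolated subset of $C^*$ (you derive it from global vertex-maximality), and the mismatch between the $\delta$ argument actually passed ($\deg_{G_\cap}(v)$) and the hypothesis of \cref{thm:correctness_isolatedSubsets}; your monotonicity remark correctly disposes of the latter, and in fact for the particular $C^*$ needed in completeness one has $v\in C^*$ and hence $\deg_{G_\cap}(v)=\delta_{G_\cap}(C^*)$ exactly.
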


\begin{proof}
	By \cref{thm:correctness_isolatedSubsets} and \cref{thm:correctness_isMaximal} every element of the output is in fact a maximal $I$-$c$-isolated clique.
	So it remains to show that all such cliques are in fact found by the algorithm.
	To this end, let $(C, [a, b])$ be any maximal $I$-$c$-isolated clique.
	Then, $C$ is an avg-$c$-isolated clique in $G_\cap = \bigcap_{a \leq i \leq b} G_i$ by \cref{thm:isolation_relations}.
	\citet{Ito2009} showed that we then have $C \subseteq C_v$ where $v \in C$ is of minimum degree.
	Further $\abs{C} \geq \abs{C_v} - k$ by \cref{thm:basic_lemma}.
	Thus, $C \subseteq C'$ for some $C' \in \Cc$, so $(C, [a, b])$ is added to the result set by \cref{thm:correctness_isolatedSubsets}.
	Finally, $(C, [a, b])$ is also included in the output by \cref{thm:correctness_isMaximal}.
\end{proof}

\FloatBarrier{} 

\subsection{Running Time Analysis}

We will now estimate the time complexity of the different algorithms in terms of $\tau$, $\abs{V}$, $\abs{\TE}$, and the isolation parameter $c$.

We start estimating the running time of \Cref{alg:noIntersections} in terms of
$T^{(I)}_{\isolatedSubsets{}}$ and $T^{(I)}_{\isMaximal{}}$, which shall
denote the running times of the $I$-\isolatedSubsets{} and
$I$-\isMaximal{} subroutines, respectively, since they are the parts of the
running time that depend on the isolation type.

\begin{lemma}\label{lem:runningtime}
Let $\TG = (V, E_1, \dots, E_\tau)$ be a temporal graph, let $c\in\QQ$, and let
$I \in \Iii \setminus \{$\oftenmax{}$\}$. \Cref{alg:noIntersections} runs in
$\bigO\left(2^c c^2 \tau \cdot \abs{\TE} + T^{(I)}_{\isolatedSubsets{}} + T^{(I)}_{\isMaximal{}} \right)$ time.
\end{lemma}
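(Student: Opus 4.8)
The plan is to bound the running time of \Cref{alg:noIntersections} by separating the ``type-independent skeleton'' (the two nested loops over $a$ and $b$, the trimming procedure, the maximal-clique enumeration inside each candidate set) from the calls to the type-specific subroutines, whose costs we abstract into $T^{(I)}_{\isolatedSubsets{}}$ and $T^{(I)}_{\isMaximal{}}$. First I would handle the first part of the algorithm (lines up to \cref{line:endfirstpart}). The outer double loop contributes a factor $\tau^2$, but this is too generous; the standard trick is to amortize: for a fixed~$a$, as~$b$ ranges over $a,\dots,\tau$ the graph $G_\cap=\bigcap_{i=a}^b G_i$ only loses edges, and one can maintain it incrementally so that the total work over all~$b$ for a fixed~$a$ is governed by $\abs{\TE}$ rather than $\tau\cdot\abs{\TE}$. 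This should give the claimed $\tau\cdot\abs{\TE}$ (not $\tau^2$) for the part that is independent of the isolation type. Within one iteration, sorting vertices by degree in $G_\cap$ costs $\bigO(\abs{V}\log\abs{V})$, and Ito's trimming procedure to compute all candidate sets $C_v$ runs in time linear in the size of $G_\cap$.

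Next I would bound the maximal-clique enumeration. The key point is \cref{thm:basic_lemma}: every isolated clique inside $C_v$ has size at least $k=\floor{\deg_{G_\cap}(v)-c+2}$, which forces $\abs{C_v\setminus N[v]}$-type bounds. More precisely, since $C_v\subseteq N[v]$ and we only want maximal cliques of size $\geq k$, and since the minimum degree in $G_\cap$ restricted to $C_v$ is at least $k-1$ once we have trimmed, the complement graph on $C_v$ has maximum degree $\bigO(c)$; hence the number of maximal cliques of size $\geq k$ in $C_v$ is $2^{\bigO(c)}$ and they can all be listed in $\bigO(2^c\cdot\poly(c)\cdot\abs{C_v})$ time by, e.g., the Bron--Kerbosch / Moon--Moser style argument used by \citet{Ito2009,MaxIsolatedCliques}. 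Summed over all pivots~$v$ this is $\bigO(2^c c^2 \cdot\abs{G_\cap})$, and summed over all $(a,b)$ with the amortization above this is $\bigO(2^c c^2\tau\cdot\abs{\TE})$. The calls to \isolatedSubsets{} contribute the abstract term $T^{(I)}_{\isolatedSubsets{}}$.

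Finally, the second part of the algorithm (from \cref{line:startsecondpart}) iterates over the result set and calls $I$-\isMaximal{} once per element; the size of the result set is dominated by the number of isolated subsets produced in the first part, which is subsumed by the $2^{\bigO(c)}$ bound, so this phase costs $\bigO(T^{(I)}_{\isMaximal{}})$ up to the already-counted factors. Adding the three contributions yields the bound $\bigO\!\left(2^c c^2\tau\cdot\abs{\TE}+T^{(I)}_{\isolatedSubsets{}}+T^{(I)}_{\isMaximal{}}\right)$.

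The main obstacle I anticipate is getting the $\tau$ (rather than $\tau^2$) factor honestly: one must argue that the intersection graphs $G_\cap$ can be maintained across increasing~$b$ with total update cost $\bigO(\abs{\TE})$ per fixed~$a$, and that trimming plus clique enumeration can likewise be run incrementally or at least charged against the edges actually present. A secondary subtlety is bounding the number of maximal cliques of size $\geq k$ in $C_v$ by $2^{\bigO(c)}$ with the right constant inside the exponent — here one has to be careful that \cref{thm:basic_lemma} is applied to the trimmed set so that the relevant complement-degree bound is $\bigO(c)$, and that the size of the result set fed into the second phase is likewise $2^{\bigO(c)}$ per $(a,b,v)$ so that it does not dominate.
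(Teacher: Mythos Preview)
Your approach is essentially the same as the paper's: split off the type-specific subroutines, amortize the double loop over $[a,b]$ by charging work to edges actually present in $G_\cap$, and invoke the Ito et al.\ and \citet{MaxIsolatedCliques} bounds for trimming and for enumerating maximal cliques of size $\geq k$ inside~$C_v$.

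Two small technical points where your sketch deviates from what is needed for the exact bound. First, sorting by degree with a comparison sort costs $\bigO(\abs{V}\log\abs{V})$ per iteration and would introduce a stray $\log\abs{V}$ factor (or worse, a $\tau^2\abs{V}$ term that need not be dominated); the paper uses bucketsort together with the convention of ignoring vertices of degree zero in $G_a$, so the sorting cost is absorbed into the $\bigO(\tau\cdot\abs{\TE})$ term. Second, the trimming stage is not linear in $\abs{G_\cap}$ but $\bigO(c^3\cdot\abs{E_\cap})$ \cite[Lemmata~3.9 and~3.13]{Ito2009}; this is harmless since $c^3$ is swallowed by $2^cc^2$, but you should quote the correct bound. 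Also, the amortization you describe strictly speaking only covers \emph{maintaining} $G_\cap$; for the trimming and clique-enumeration work, which are redone from scratch each iteration, the paper simply uses the static inequality $\sum_{a}\abs{E_\cap}\le\sum_a\abs{E_a}\le\abs{\TE}$ (for each fixed $b$), which is the ``charged against the edges actually present'' idea you already anticipate in your obstacle paragraph.
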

\begin{proof}
We first investigate the running time of the first part of \Cref{alg:noIntersections} (the part up until \cref{line:endfirstpart}).
When iterating over all time windows $[a, b]$, each iteration of the inner loop extends the time window by one layer only.
Because of this, we can compute the intersection graphs in $\bigO(\tau \cdot \abs{\TE})$ time overall by using incremental updates.

All sorting steps of the algorithm can be done by bucketsort using constant time per time window and pivot vertex.
Computing $C_v$ for all vertices of $G_\cap$ takes $\bigO(c^3 \cdot \abs{E_\cap})$ time \cite[Lemmata~3.9 and~3.13]{Ito2009}.
Computing $\Cc$ from $C_v$ takes $\bigO(\abs{E(C_v)} + c \cdot \abs{C_v} + 2^c
c^2)$ time \cite[Proof of Proposition~1]{MaxIsolatedCliques}.
The overall number of steps (not counting \isolatedSubsets{} and \isMaximal{}) is thus
$\bigO\left(\tau \cdot \abs{\TE} + \sum_a \sum_b \left( c^3 \cdot \abs{E_\cap} + \sum_{\text{pivot }v} \left(  \abs{E(C_v)} + c \cdot \abs{C_v} + 2^c c^2 \right)\right)\right)$.

Note that $\sum_a \abs{E_\cap} \leq \sum_a \abs{E_a} \leq \abs{\TE}$.
Further, we assume that the algorithm is implemented to disregard vertices that have degree zero in~$G_a$ and thus also in $\bigcap_{t=a}^b G_t$.
Because of this assumption, we can also record the following observation.
If we sum $\deg_{G_a}(v) + 1$ over all time windows $[a, b]$ and pivot
vertices $v$, then the result is at most $\sum_a \sum_b
\sum_{\text{pivot } v} (\deg_{G_a}(v) + 1) \in \bigO(\sum_b
\sum_a \abs{E_a}) \subseteq \bigO(\tau \cdot \abs{\TE})$ by the handshake lemma.
Of course, the same estimation is valid when summing over any of the following:
$1 \leq \abs{C} \leq \abs{C_v} \leq \deg_{G_\cap}(v) +1 \leq \deg_{G_a}(v) +1$.

Another key observation is that $\sum_v \abs{E(C_v)} \in \bigO(c^3 \cdot
\abs{E_\cap})$ \cite[Lemma~3.13]{Ito2009}.
Using this, if we sum $\abs{E(C_v)}$ over all time windows and pivot vertices, we get at most
$\sum_a \sum_b \sum_{\text{pivot }v} \abs{E(C_v)}
\in \bigO(\tau \sum_a c^3 \cdot \abs{E_\cap}) 
\subseteq \bigO( c^3 \tau \cdot \abs{\TE})$.
Again, this also applies to $\abs{E(C)} \leq \abs{E(C_v)}$.

Employing these observations, the above running time can be bounded by
$\bigO(\tau \cdot \abs{\TE} + c^3 \tau \cdot \abs{\TE} +  +  c \tau \cdot
\abs{\TE} + 2^c c^2 \tau \cdot \abs{\TE}) \subseteq \bigO(2^c c^2 \tau \cdot
\abs{\TE})$.
\end{proof}

Now we analyze the running time $T^{(I)}_{\isolatedSubsets{}}$ of the
$I$-\isolatedSubsets{} subroutine (Functions
\labelcref{alg:isolatedSubsets_alltimemax,alg:isolatedSubsets_maxoften,alg:isolatedSubsets_oftenavg,alg:isolatedSubsets_alltimeavg,alg:isolatedSubsets_avgalltime})
 depending on the isolation type $I$.
 
\begin{lemma}\label{runningtime:isolatedsubsets}
$T^{(I)}_{\isolatedSubsets{}} \in 
\begin{cases}
\bigO(2^c c \tau \cdot \abs{\TE}) & \text{if } I \in \{\alltimemax{}, \maxoften{}\} , \\
\bigO(2^c c^2 \tau \cdot \abs{\TE}) & \text{if } I \in \{\oftenavg{}, \avgalltime{}\} , \\
\bigO(c^c \tau^2 \cdot \abs{\TE}) & \text{if } I \in \{\alltimeavg{}\} . \\
\end{cases}$

\end{lemma}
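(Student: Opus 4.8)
The statement bounds $T^{(I)}_{\isolatedSubsets{}}$, the total time spent in all calls to the $I$-\isolatedSubsets{} subroutine over the whole run of \Cref{alg:noIntersections}. I would handle the three cases separately, but all of them share a common accounting skeleton: a single invocation of \isolatedSubsets{} on a clique $C$ inside time window $[a,b]$ does work proportional to (number of candidate removal sets it generates) $\times$ (cost of checking isolation of one such set), and then these per-invocation bounds are summed against the telescoping estimates already established in the proof of \Cref{lem:runningtime} — namely $\sum_a\sum_b\sum_{\text{pivot }v} (\deg_{G_a}(v)+1) \in \bigO(\tau\cdot\abs{\TE})$, and the same bound for $\abs{C}\le\abs{C_v}$, as well as $\sum\abs{E(C_v)}\in\bigO(c^3\tau\cdot\abs{\TE})$.

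\textbf{Cases $I\in\{\alltimemax{},\maxoften{}\}$ (Functions \labelcref{alg:isolatedSubsets_alltimemax,alg:isolatedSubsets_maxoften}).} Here the subroutine is just a linear peeling loop: precompute $s_v$ for all $v\in C$, then repeatedly delete one offending vertex until either an isolated set is found or the size threshold $k=\floor{\delta-c+2}$ is hit. The number of deletions is at most $\abs{C}-k \le \abs{C}$, and by \cref{thm:basic_lemma} combined with $\abs{C}\le\abs{C_v}\le\deg_{G_\cap}(v)+1$, actually $\abs{C}-k\in\bigO(c)$. Each iteration needs to find a vertex with $s_v$ above the current threshold, which with an appropriate bucket/priority structure costs $\bigO(\abs{C})$, and computing the $s_v$ initially costs $\bigO((b-a+1)\cdot\abs{C} + \abs{E(C)})$ — but summed over all windows and pivots the $(b-a+1)$ factors telescope into a $\tau$ and the $\abs{C}$-sums into $\abs{\TE}$. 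The one place the $2^c$ enters is that \emph{per maximal clique} $C\in\Cc$ we pay this $\bigO(c\cdot\abs{C})$-ish cost, and $\abs{\Cc}$ can be as large as $2^c$ times a polynomial factor by the bound from \cite[Proof of Proposition~1]{MaxIsolatedCliques} (there are at most $\bigO(2^c)$ maximal cliques in $C_v$ of the relevant size). Multiplying $2^c$ by the $\bigO(c\tau\cdot\abs{\TE})$-budget for the peeling work over all maximal cliques gives $\bigO(2^c c\tau\cdot\abs{\TE})$. I would be careful here to state precisely where the factor $\abs{\Cc}\le\bigO(2^c\,\poly)$ comes from and to check the $(b-a+1)$ factor is absorbed rather than left standing (it would be, since each window is entered once).

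\textbf{Cases $I\in\{\oftenavg{},\avgalltime{}\}$ (Functions \labelcref{alg:isolatedSubsets_oftenavg,alg:isolatedSubsets_avgalltime}).} These run a search tree over removal sets $D\subseteq B$, where $B$ is the set of the $d=\floor{\abs{C}-\delta+c-2}$ highest-$s_v$ vertices; by \cref{thm:highest_degree_vertices_b} (resp.\ \cref{thm:highest_degree_vertices}) every maximal isolated subset avoids only vertices of $B$, and $d\in\bigO(c)$ again because $\abs{C}-\delta\le\abs{C}-\deg_{G_\cap}(v)+ (\abs{C_v}-\abs{C})$ is $\bigO(c)$ via the trimming bound. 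The tree is built so each node expands only to supersets using $v_i$ with larger index, so the number of nodes is at most $2^d\le 2^{\bigO(c)}$; at each node we recompute $\sum_{v\in C'}s_v$ and compare, costing $\bigO(\abs{C})$ (or $\bigO((b-a+1))$ for re-deriving $s_v$ — but $s_v$ is computed once up front in $\bigO((b-a+1)\abs{C}+\abs{E(C)})$). Multiplying $2^{\bigO(c)}$ nodes by an $\bigO(c)$-ish per-node cost, then by $\abs{\Cc}\le\bigO(2^c\poly)$, and summing the polynomial parts against the telescoping budgets, lands at $\bigO(2^c c^2\tau\cdot\abs{\TE})$ — the extra factor $c$ relative to the previous case coming from the tree having $2^{\bigO(c)}$ leaves each doing $\bigO(c)$ work, versus a single $\bigO(c)$-length path. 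Keeping the constant in $2^{\bigO(c)}$ honestly at $2^c$ (not $4^c$ or worse) requires noting that the branching is over subsets of a $d$-element set where $d\le \abs{C}-\delta+c-2$, and that this $d$ is itself $\le c + (\abs{C}-1-\delta)$ with $\abs{C}-1\le\deg_{G_\cap}(v)=\delta+ \bigO(1)$ only for the pivot… the precise arithmetic giving $d\le 2(c-1)$ or similar so the total is $2^c$ rather than $4^c$ is something I would verify carefully; this is a likely source of an off-by-factor.

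\textbf{Case $I=\alltimeavg{}$ (Function \labelcref{alg:isolatedSubsets_alltimeavg}).} This is the main obstacle and the reason the bound degrades to $\bigO(c^c\tau^2\cdot\abs{\TE})$. The difficulty is that the "violating layer" $i$ — and hence the set $E$ of the $d$ candidates for removal — changes as vertices are deleted, so the search tree is no longer over subsets of a fixed $d$-set: at each node we pick the smallest violating layer $i$, take its top-$d$ vertices, and branch into $d$ children. The tree therefore has branching factor $d\in\bigO(c)$ and depth $\bigO(c)$ (each step shrinks $C'$, and by \cref{thm:highest_degree_vertices} at most $d\in\bigO(c)$ vertices are ever removed), giving $\bigO(c^c)$ nodes; each node costs $\bigO(\tau\cdot\abs{C})$ to scan all layers for a violation plus sort the top-$d$ in one layer. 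Summing $\bigO(c^c)\cdot\bigO(\tau\abs{C})$ over all maximal cliques and windows — where now the extra $\tau$ per node multiplies against the $\tau$ from window iteration but the $\abs{C}$-sum still telescopes into $\abs{\TE}$ — yields $\bigO(c^c\tau^2\cdot\abs{\TE})$. I would need to justify that the tree genuinely closes off, i.e.\ no maximal isolated subset is visited more than polynomially often despite the shifting candidate sets; this follows from \cref{thm:correctness_isolatedSubsets} (Case 1), which already argues each maximal $\tilde C$ is eventually reached via a descending chain, but bounding \emph{total} nodes rather than merely reachability is the part that needs the explicit branching-factor-$d$, depth-$\bigO(c)$ argument. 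The absence of a $2^c$ (replaced by $c^c$) and presence of $\tau^2$ rather than $\tau$ are both artifacts of exactly this loss of the fixed-candidate-set structure, and I would flag that as the qualitative point distinguishing this case.
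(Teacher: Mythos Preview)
Your high-level decomposition matches the paper's: bound the work per call to \isolatedSubsets{} and sum using the telescoping observations from the proof of \cref{lem:runningtime}. However, there is one genuine gap that affects all three cases and prevents you from reaching the stated constants, plus one minor issue.

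\textbf{The missing coupling argument.} You bound $d \in \bigO(c)$ and $\abs{\Cc}\le 2^c$ \emph{separately}, which yields $2^c\cdot 2^d \approx 4^c$ in Cases~2--3 and $2^c\cdot d^d \approx (2c)^c$ in the \alltimeavg{} case --- exactly the ``off-by-factor'' you flag. The paper closes this gap by observing that the two quantities are coupled through $\abs{C}$. Concretely, writing $s=\abs{C}$, one has
\[
d \;=\; \lfloor s - \delta + c - 2\rfloor \;\le\; s - \abs{C_v} + c - 1,
\]
since $\abs{C_v}\le \delta+1$; and $\Cc$ contains at most $2^{\abs{C_v}-s}$ cliques of size~$s$. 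Multiplying,
\[
2^{\abs{C_v}-s}\cdot 2^{d}\;\le\;2^{\abs{C_v}-s}\cdot 2^{\,s-\abs{C_v}+c-1}\;=\;2^{c-1},
\qquad
2^{\abs{C_v}-s}\cdot d^{d}\;\le\;2^{\abs{C_v}-s}\cdot c^{\,s-\abs{C_v}+c-1}\;\le\;c^{c-1}.
\]
Summing over the $\le c$ admissible values of~$s$ gives $\bigO(2^c c)$ and $\bigO(c^c)$ respectively, \emph{per time window and pivot}, with the $\abs{\Cc}$ factor already absorbed. This is what yields $2^c$ rather than $4^c$ and $c^c$ rather than $(2c)^c$; your attempt to get there by tightening the global bound on~$d$ alone cannot work, since both $d\le c-1$ and $\abs{\Cc}\le 2^{c-1}$ are individually tight.

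\textbf{The $s_v$ computation.} Your claim that ``the $(b-a+1)$ factors telescope into a~$\tau$'' is not correct: $\sum_{a}\sum_{b}(b-a+1)$ is $\Theta(\tau^3)$, not $\Theta(\tau^2)$. The paper instead computes all $s_v$ values by \emph{incremental updating} as the window is extended from $[a,b]$ to $[a,b+1]$, costing only the edges of layer $b{+}1$ per step and hence $\bigO(\tau\cdot\abs{\TE})$ in total; crucially, $s_v$ depends only on $v$ and $[a,b]$, not on the particular clique~$C$, so no per-call recomputation is needed.
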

\begin{proof}
Keep in mind the observations made in the proof of \cref{lem:runningtime}, which are also useful here. 
Additionally, note that within any iteration of \Cref{alg:noIntersections}, $\Cc$ contains at
most $2^{\abs{C_v} - s}$ elements of size $s$ for each $k \leq s \leq \abs{C_v}$
and at most $2^{\abs{C_v} - k} \leq 2^c$ elements overall.

\textbf{Case 1: $I \in \{\alltimemax{}, \maxoften{}\}$.}
Computing $s_v$ takes $\bigO(\tau \cdot \abs{\TE})$ overall (again, using incremental updating between time windows).

For each call, the loop runs at most $\abs{C} - k \leq \abs{C} - (\delta + 1)  + c \leq c$ times, each needing $\bigO(\abs{C})$ time.
Since there are $\abs{\Cc}$ calls per time window and pivot,
the overall time is in
$\bigO(\tau \cdot \abs{\TE} + 2^c c \tau \cdot \abs{\TE}) \subseteq \bigO(2^c c
\tau \cdot \abs{\TE})$

\textbf{Case 2: $I = \alltimeavg{}$.}
There are $d^d$ possible options for $D$, each tested in $\bigO(\tau)$ time.
Of each size $s$ there are $2^{C_v -s}$ elements of that size within $\Cc$.
Note that $d \leq \abs{C} + c - \abs{C_v} -1 \leq c - 1$.
Thus, the loop needs 
\begin{align*}
	\bigO\left(\sum_s 2^{C_v - s} d^d \tau\right)
	\subseteq \bigO\left(\sum_s 2^{C_v - s} c^{s + c - C_v -1}\right)
	\subseteq \bigO\left( \sum_s c^{c-1} \tau\right)
	\subseteq \bigO\left( c^c \tau\right)
\end{align*}
time per time window and pivot,
giving at most $\bigO(c^c \tau^2  \cdot \abs{\TE})$ time overall.

\textbf{Case 3: $I \in \{\oftenavg{}, \avgalltime{}\}$.}
Computing $s_v$ again takes $\bigO(\tau \cdot\abs{\TE})$ time overall.

Here, there are $2^d$ possible options for $D$, each tested in constant time.
Thus, the loop needs
\begin{align*}
	\bigO\left(\sum_s 2^{C_v - s} 2^d \right)
	\subseteq \bigO\left(\sum_s 2^{C_v - s} 2^{s + c - C_v -1}\right)
	\subseteq \bigO\left( \sum_s 2^{c-1}\right)
	\subseteq \bigO\left( 2^c c\right)
\end{align*}
time per time window and pivot (again $s = \abs{C}$).
In total, this gives $\bigO(2^c c^2 \tau \cdot \abs{\TE})$.
\end{proof}

Finally, we analyze the running time $T^{(I)}_{\isMaximal{}}$ of the
\isMaximal{} subroutine (Functions \labelcref{alg:testMaximal_alltime,alg:testMaximal_usually})
 depending on the isolation type.
\begin{lemma}\label{runningtime:ismaximal}
$T^{(I)}_{\isMaximal{}} \in
\begin{cases}
	\bigO(2.89^c c \tau \cdot \abs{\TE}) & \text{if } I = \alltimemax{} , \\
	\bigO(2.89^c c \tau^3 \cdot \abs{\TE}) & \text{if } I = \maxoften{} , \\
	\bigO(2^{2c} \tau \cdot \abs{V} \cdot \abs{E} & \text{if } I = \alltimeavg{} , \\
	\bigO(5.78^c c \tau \cdot \abs{E} & \text{if } I = \avgalltime{} , \\
	\bigO(5.78^c c \tau^3 \cdot \abs{E} & \text{if } I = \oftenavg{} .
\end{cases}
$
\end{lemma}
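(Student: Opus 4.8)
The plan is to bound $T^{(I)}_{\isMaximal{}}$ as the product of two quantities: the number of times \Cref{alg:noIntersections} calls $I$-\isMaximal{} --- which is exactly the number of entries it has placed in the result set, since \Cref{alg:noIntersections} invokes $I$-\isMaximal{} once per entry --- and the worst-case cost of a single such call. Both factors depend on $I$, and in each the dependence enters only through which of the two \isMaximal{} variants (Functions~\labelcref{alg:testMaximal_alltime} and~\labelcref{alg:testMaximal_usually}) and which of the two \isVertexMaximal{} variants (Functions~\labelcref{alg:vertexMaxClever} and~\labelcref{alg:vertexMaxDumb}) is invoked, so I would organize the proof around those four sub-lemmas.

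First I would count the result-set entries. By \cref{thm:correctness_isolatedSubsets} the entries produced while processing a fixed time window $[a,b]$ and pivot $v$ are precisely the maximal $I$-$c$-isolated subsets of the cliques in $\Cc$, and (as recalled in the proof of \cref{runningtime:isolatedsubsets}) $\Cc$ has at most $2^{|C_v|-s}$ cliques of size $s$ and at most $2^c$ cliques overall. For $I\in\{\alltimemax{},\maxoften{}\}$ each clique of $\Cc$ has at most one maximal isolated subset, giving $\le 2^c$ entries per $(a,b,v)$; for $I\in\{\avgalltime{},\oftenavg{}\}$ the \isolatedSubsets{} search tree yields $\le 2^d\le 2^c$ subsets per clique, hence $\le 2^{2c}$ per $(a,b,v)$; for $I=\alltimeavg{}$ the analogous coarser bound is $\le c^c$ per $(a,b,v)$. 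Then I would sum over $s$ and over all windows and pivots, reusing the handshake-type estimate $\sum_{a\le b}\sum_{\text{pivot }v}(\deg_{G_a}(v)+1)\in\bigO(\tau\,|\TE|)$ from the proof of \cref{lem:runningtime} --- it still applies because each per-window multiplier depends only on $c$ --- to get totals of $\bigO(2^c\tau|\TE|)$, $\bigO(2^{2c}\tau|\TE|)$, and $\bigO(c^c\tau|\TE|)$ respectively.

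Next I would bound a single \isMaximal{} call. Function~\labelcref{alg:testMaximal_alltime} (used for $\alltimeavg{},\alltimemax{},\avgalltime{}$) does only $\bigO(1)$ extra isolation tests before one \isVertexMaximal{} call, whereas Function~\labelcref{alg:testMaximal_usually} (used for $\maxoften{},\oftenavg{}$) loops over the $\bigO(\tau^2)$ windows $[a',b']\supseteq[a,b]$ in which $C$ stays a clique, doing an isolation test and an \isVertexMaximal{} call each --- this is exactly where the extra $\tau^2$ factor (the $\tau^3$ instead of $\tau$) for \maxoften{} and \oftenavg{} comes from. The remaining cost is dominated by \isVertexMaximal{}. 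The key lemma for the four types using Function~\labelcref{alg:vertexMaxClever} is that the candidate set $S$ is small: since the clique in question is $I$-$c$-isolated it is in particular avg-$c$-isolated in $G_\cap:=\bigcap_{i=a}^{b}G_i$ by \cref{thm:isolation_relations}, so $\sum_{u\in C}\outdeg_{G_\cap}(u,C)<c|C|$, whence the minimum-degree pivot $w$ of $C$ has $\outdeg_{G_\cap}(w,C)<c$ and therefore $|S|\le|N_{G_\cap}(w)\setminus C|<c$. Then $G_\cap[S]$ has $<c$ vertices, so by the Moon--Moser bound it has $<3^{c/3}$ maximal cliques, enumerable in $\bigO(3^{c/3}\poly(c))$ time, and for each the inner while-loop removes at most $|S|<c$ vertices with a poly-time isolation test per removal. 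For $\alltimeavg{}$ I would instead charge Function~\labelcref{alg:vertexMaxDumb} an $\bigO(|V|)$ cost per pairwise comparison against the same-window result-set entries.

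Finally I would multiply the two bounds, absorb the polynomial-in-$c$ factors into the exponential base and round ($2\cdot3^{1/3}<2.89$, $4\cdot3^{1/3}<5.78$): this yields $\bigO(2.89^c c\,\tau\,|\TE|)$ for \alltimemax{}, that times $\tau^2$ for \maxoften{}, $\bigO(5.78^c c\,\tau\,|E|)$ for \avgalltime{} (the extra $2^c$ over \alltimemax{} being the larger result-entry count of the avg-type search trees), that times $\tau^2$ for \oftenavg{}, and $\bigO(2^{2c}\tau|V||E|)$ for \alltimeavg{}, where the pairwise-comparison $|V|$ replaces the Moon--Moser factor. I expect the main obstacle to be not any single estimate but the simultaneous bookkeeping across the nested loops of \Cref{alg:noIntersections}: making sure the handshake telescoping survives per-window multipliers that are themselves exponential in $c$, that the $2^c$ clique-count bound and the Moon--Moser enumeration bound are the \emph{only} exponential contributions and that they multiply rather than compose, and --- for the \maxoften{}/\oftenavg{} window loop --- that \isVertexMaximal{} is only ever run at non-negligible cost on windows for which the $|S|<c$ argument goes through, with non-isolated windows rejected cheaply.
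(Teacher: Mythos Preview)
Your plan mirrors the paper's proof almost exactly: count result-set entries per window and pivot via the $|\Cc|\le 2^c$ bound and the per-clique output bound of \isolatedSubsets{}, bound a single \isMaximal{} call using $|S|<c$ and the Moon--Moser $3^{c/3}$ enumeration, and then telescope over windows and pivots with the handshake estimate from \cref{lem:runningtime}. Two points deserve attention.

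First, your treatment of \alltimeavg{} is internally inconsistent. You bound the number of entries per $(a,b,v)$ by $c^c$ (reflecting the layer-dependent branching in Function~\labelcref{alg:isolatedSubsets_alltimeavg}), yet your final bound is $\bigO(2^{2c}\tau|V||\TE|)$. The paper obtains the $2^{2c}$ by asserting that each \isolatedSubsets{} call outputs at most $2^d\le 2^c$ sets; combined with $|\Cc|\le 2^c$ this gives $2^{2c}$ entries per $(a,b,v)$. If you stick with $c^c$ you will overshoot the stated bound, so you need the $2^d$ output bound here (or an equivalent argument) rather than the search-tree-size bound you use for the running time of \isolatedSubsets{} in \cref{runningtime:isolatedsubsets}.

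Second, the obstacle you flag at the end---whether $|S|<c$ still holds when Function~\labelcref{alg:testMaximal_usually} calls \isVertexMaximal{} on a window $[a',b']\supsetneq[a,b]$ in which $(C,[a',b'])$ need not be $I$-$c$-isolated---is not actually a problem, and the paper simply does not comment on it. Since $[a',b']\supseteq[a,b]$, the intersection graph $G_\cap^{[a',b']}$ is a subgraph of $G_\cap^{[a,b]}$, so every $\outdeg_{G_\cap^{[a',b']}}(v,C)\le\outdeg_{G_\cap^{[a,b]}}(v,C)$. As $(C,[a,b])$ is $I$-$c$-isolated and hence avg-$c$-isolated in $G_\cap^{[a,b]}$ by \cref{thm:isolation_relations}, $C$ is also avg-$c$-isolated in $G_\cap^{[a',b']}$, and your $|S|<c$ argument goes through unchanged for every window visited by the loop.
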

\begin{proof}
\textbf{Case 1: $I = \alltimemax{}$.}
Each call to \isolatedSubsets{} returns at most one clique, thus
for each time window and pivot $v$ there are at most $\abs{\Cc} \leq 2^c$ cliques to be checked.
Each call to isMaximal takes $\bigO(\abs{C})$ time, in addition to one call to isVertexMaximal.

Regarding isVertexMaximal, the size of $S \subseteq N(v) \setminus C$ is at most $\deg(v) + 1 - \abs{C} < c$
and finding it takes $c \cdot \abs{C}$ time.
Computing $\Dd$ takes $\bigO(3^{c/3})$ time \cite{Tomita2006}
and $\Dd$ has size at most $3^{c/3}$.
For each $D \in \Dd$ we need $\bigO(\abs{D}) \subseteq \bigO(c)$ time.

Altogether, each call to isMaximal takes $\bigO(c \cdot  \abs{C} + 3^{c/3} c)$ time, giving an overall running time of
$\bigO(2^c c \tau  \cdot \abs{\TE} + 2^c 3^{c/3} c \tau  \cdot \abs{\TE}) \subseteq \bigO( 2.89^c c \tau \cdot \abs{\TE})$.

\textbf{Case 2: $I = \maxoften{}$.}
Again at most $2^c$ cliques need to be checked for each time window and pivot $v$.

For each call we need $\bigO(\tau \cdot \abs{E(C)})$ to determine the layers where $C$ is a clique and $\bigO(\tau^2 \cdot \abs{C})$ for the isolation check.
Further, there are $\tau^2$ calls to isVertexMaximal.
Of these, each takes $\bigO(3^{c/3}c)$ time as for \alltimemax{}-isolation.

Thus the total time per call is 
$\bigO(\tau\abs{E(C)} + \tau^2 \abs{C} + 3^{c/3}c\tau^2 )$,
giving an overall time bound of
$\bigO( 2^c c^3 \tau^2 \cdot \abs{\TE} +  2^c  \tau^3 \cdot \abs{\TE} +  2^c 3^{c/3} c \tau^3 \cdot \abs{\TE} )
\subseteq \bigO( 2.89^c c  \tau^3 \cdot \abs{\TE})$.

\textbf{Case 3: $I = \alltimeavg{}$.}
Each call to \isolatedSubsets{} returns at most $2^d \leq 2^c$ cliques,
therefore there are at most $2^{2c}$ cliques to be checked per time window and pivot.
Each call to isMaximal takes $\bigO(\abs{C})$ time, in addition to one call to isVertexMaximal.

Within isVertexMaximal, we only need to check against cliques for the same time window, 
of these there are at most $\sum_v \abs{\Cc}$ many, each of size at most $\abs{C_v}$.
The time needed to check a clique for maximality is linear in the total size of this set, i.e.\ 
$\bigO(\sum_v \abs{\Cc} \cdot \abs{C_v}) \subseteq \bigO(\abs{\Cc} \cdot \abs{E_\cap})$.

In total, each call to isMaximal takes $\bigO(\abs{\Cc} \cdot \abs{E_\cap})$ time, 
and the total time taken is thus $\bigO( 2^{2c} \tau \cdot \abs{V} \cdot \abs{\TE})$.

\textbf{Case 4: $I = \oftenavg{}$.}
Each call to \isolatedSubsets{} returns at most $2^d \leq 2^c$ cliques.
Apart from this extra factor, the analysis is identical to the \maxoften{}-isolation case.
Thus the total time is $\bigO( 2.89^c 2^c c \tau^3 \cdot \abs{\TE}) \subseteq \bigO(5.78^c c \tau^3  \cdot \abs{\TE})$.

\textbf{Case 5: $I = \avgalltime{}$.}
Each call to \isolatedSubsets{} returns at most $2^d \leq 2^c$ cliques.
Apart from this extra factor, the analysis is identical to the \alltimemax{}-isolation case.
Thus the total time is $\bigO( 2.89^c 2^c c \tau \cdot \abs{\TE}) \subseteq \bigO(5.78^c c \tau \cdot \abs{\TE})$.
\end{proof}

Now it is straightforward to check that
\cref{runningtime:isolatedsubsets,runningtime:ismaximal} together with
\cref{lem:runningtime} imply the running times given in 
\cref{table:runningtimes}. This together with \cref{prop:correctness} completes
the proof of \cref{thm:FPT}.

\section{Experimental Evaluation}\label{sec:exp}
In this section, we empirically evaluate the running times of our enumeration algorithms for
maximal isolated temporal cliques (\Cref{alg:noIntersections}) on several
real-world temporal graphs. In particular, we investigate the effect of different isolation
concepts as well as different values for isolation parameter~$c$ and~$\Delta$
(see the definition of $\Delta$-cliques in \cref{sec:prelims}) on the running time and on the number
of cliques that are enumerated.
We also draw some comparisons concerning running times to a state-of-the-art algorithm to enumerate
maximal (non-isolated) temporal cliques by \citet{BHMMNS18}.

\subsection{Setup and Statistics}
We implemented our algorithms%
\footnote{The code of our implementation is freely available at https://www.akt.tu-berlin.de/menue/software/}
in Python~3.6.8 and carried out experiments on an Intel Xeon E5-1620 computer clocked at 3.6\,GHz and with 64\,GB RAM running Debian GNU/Linux~6.0.
The given times refer to single-threaded computation.
\citet{BHMMNS18} implemented their algorithm in Python~2.7.12.

For the sake of comparability we tested our implementation on four freely available
data sets, three of which were also used by \citet{BHMMNS18}:
\begin{itemize}
\item Face-to-face contacts between high school students (``highschool-2011'', ``highschool-2012'', ``highschool-2013''~\cite{gemmetto2014mitigation,stehle2011high,fournet2014contact}), 
\item Spatial proximity between persons in a hospital (``tij\_pres\_LH10''~\cite{Genois2018}).
\end{itemize}

\begin{table}[t]
\footnotesize
  \centering
  \caption{Statistics for the data sets used in our experiments. The lifetime $\tau$ of a graph is the difference between the largest and smallest time stamp on an edge in the graph. The resolution $r$ indicates how often edges were measured. 
  }
  \pgfplotstabletypeset[
   col sep=comma,
   columns={Data,Vertices,Edges,Resolution (s), Lifetime (s)},
   columns/Data/.style={column type=l,string type, column name={Data Set}},
   columns/Vertices/.style={column type=r, column name={\# Vertices $|V|$}},
   columns/Edges/.style={column type=r,int detect, column name={\# Edges $|\TE|$}},
   columns/Lifetime (s)/.style={column type=r,int detect, column name={Lifetime $\tau$ (in s)}},
   columns/Resolution (s)/.style={column type=r,int detect, column name={Resolution $r$ (in s)}},
   every head row/.style={before row=\toprule , after row=\midrule},
   every last row/.style={after row=\bottomrule},
   ]{GraphData.csv}
  \label{tab:stats1}
\end{table}

We list the most important statistics of the data set in \cref{tab:stats1}. We chose five roughly exponentially increasing values
$\eps,1,5,25,125$ for the isolation parameter $c$, where $\eps := 0.001$ effectively requires complete isolation and $125 \approx \abs{V}$ imposes little or no restriction.
We chose our~$\Delta$-values in the same fashion as \citet{BHMMNS18}. In order to limit the influence of time scales in the data and to make running times comparable between instances, 
the chosen $\Delta$-values of $0$, $5^3$, and $5^5$ were scaled by~$L/(5 \cdot \abs{\TE})$,
where $L$ is the temporal graph's lifetime in seconds~\cite[Section~5.1]{himmel17}.

\subsection{Experimental Results}

In \cref{fig:plot_highschool2011,fig:plot_highschool2012,fig:plot_highschool2013,fig:plot_hospital} 
the number of maximal isolated temporal cliques and the running time are plotted for each of the five isolation types and a range of isolation values $c$.
Missing values indicate that the respective instance exceeded the time limit of 1 hour.
In general, the different isolation types produce surprisingly similar output.
This suggests that the degrees of the vertices forming an isolated temporal clique are typically rather similar and remain constant over the lifetime of the clique.
Unsurprisingly, raising the value of $c$ increases the number of cliques as the isolation restriction is weakened.
However, this effect ceases roughly at $c =5$. 
Increasing $c$ further does not produce additional cliques, suggesting that the vertices in temporal cliques we found in the data sets mostly have out-degree at most five.
Furthermore, we can generally observe that the number of cliques decreases with increasing values of $\Delta$, which might seem unexpected at first glance, but is a consequence from finding many small cliques (with both few vertices and short time intervals) for small $\Delta$-values that ``merge together'' for larger $\Delta$-values. This behavior is consistent across all data sets we investigated.

Regarding running time, our algorithm is generally slower than the non-isolated clique enumeration algorithm by \citet{BHMMNS18}, even for small values of $c$. For comparison, the algorithm by \citet{BHMMNS18} solved the instances ``highschool-2011'', ``highschool-2012'', and ``highschool-2013'' for the same values for $\Delta$ that we considered in less than 17 Seconds per instance.
We believe that the two main reasons for our algorithm to be slower are the following. On the one hand, the maximality check we perform is much more complicated that the one of the algorithm of \citet{BHMMNS18}, which is an issue that also occurs in the static case~\cite{MaxIsolatedCliques,HuffnerKMN09}. On the other hand, we have to explicitly interate through more or less all possible intervals in which we could find an isolated temporal clique, which seems unavoidable in our setting. A particular consequence of this is that our algorithm is not \emph{output sensitive}, that is, the running time can be much larger than the number of maximal isolated temporal cliques in the input graph. In the case of (non-isolated) temporal clique enumeration, there are ways to circumvent these issues and in particular, the algorithm of \citet{BHMMNS18} is output sensitive. 
Both algorithms have a similar running time behavior with respect to $\Delta$, that is, the running time increases with $\Delta$, once $\Delta$ reaches moderately large values. 
Since higher values of $\Delta$ create a more dense graph after the preprocessing step, this behavior is expected. The algorithm of \citet{BHMMNS18} is slow for very small values of $\Delta$ that are close to zero (compared to itself for larger values of $\Delta$). We do not observe this phenomenon in most of our algorithms. In the variants for max-usually and usually-avg, however, we experience a similar issue for small values of $c$, especially visible in the ``tij\_pres\_LH10'' data set (\cref{fig:plot_hospital}), where the running time is surprisingly high for $\Delta=0$ and~$c=\varepsilon$. A possible explanation is that the usually-variants use a different maximality check than the alltime-variants, which may be the reason for this behavior.
Interestingly, no universal trend arises for the running time taken per resulting clique with respect to $c$, which stands in contrast to our theoretical running time analysis.

To get a more fine-grained picture, we tested intermediate values for $\Delta$ and $c$ on the ``tij\_pres\_LH10'' data for \avgalltime{}-isolation.
The results are shown in \cref{fig:delta_analysis} and \cref{fig:c_analysis}.
For increasing values of $\Delta$, the number of cliques drops while the running time per clique rises.
For fixed $\Delta$ and increasing $c$, the situation is very different.
Here, both number of cliques and running time per clique quickly rise and subsequently level off around $c=5$.

\definecolor{color0}{rgb}{0.60,0.60,0.60}
\definecolor{color1}{rgb}{0.82,0.39,0.18}
\definecolor{color2}{rgb}{0.15,0.44,0.38}
\definecolor{color3}{rgb}{0.69,0.68,0.95}

\pgfplotsset{every tick label/.append style={font=\tiny}}

\pgfplotsset{%
	myscatter/.style = {%
		only marks,
		mark size=3,
		mark options={solid}
	},
	marker1/.style = {%
		color1,
		mark=*
	},
	marker2/.style = {%
		color2,
		mark=triangle*
	},
	marker3/.style = {%
		color3,
		mark=square*
	},
	plotbyc/.style = {
		scaled y ticks=false,
		yticklabel style={/pgf/number format/fixed},
		xlabel style={align=center},
		width=0.27\textwidth,
		height=5cm,
		tick align=outside,
		tick pos=left,
		xmin=-0.3,
		xmax=4.3,
		xtick style={color0},
		xtick={0,1,2,3,4},
		xticklabels={$\eps$,1,5,25,125},
		axis line style={color0},
		ymode = log
	},
	numcliques/.style = {
		ymin = 100,
		ymax = 100000,
	},
	timeperclique/.style = {
		ymin = 0.002,
		ymax = 10,
	},
	contplot/.style = {
		width=0.5\textwidth,
		scaled y ticks=false,
		yticklabel style={/pgf/number format/fixed},
		height=5cm,
		tick align=outside,
		xtick style={color0},
		axis line style={color0},
		scale only axis,
		xtick pos=left,
	},
	timedata/.style = {
		color=green,
		solid,
		line width=2.0pt,
		mark=none,
	},
	cliquedata/.style = {
		color=blue,
		solid,
		line width=2.0pt,
		mark=none,
	},
}

\begin{figure}[p]
\begin{tikzpicture}
\footnotesize
\begin{groupplot}[
	plotbyc,
	group style={
		group size=5 by 2,
		horizontal sep=5pt,
		vertical sep=10pt
	},
]
\input{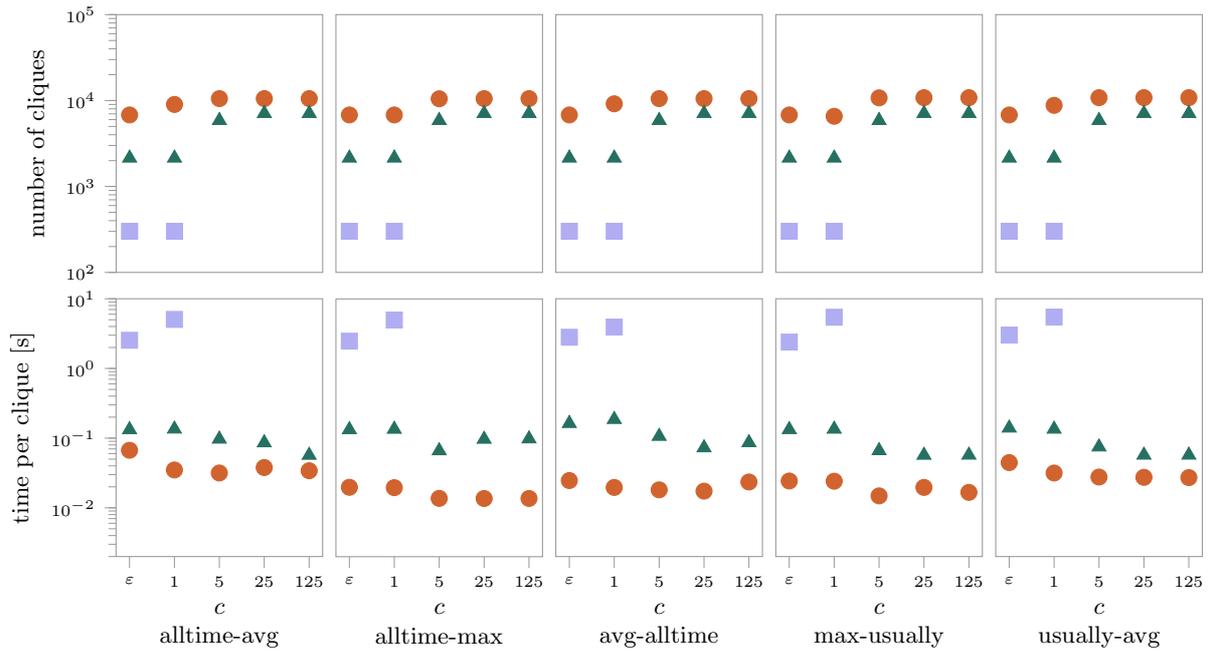}
\end{groupplot}
\end{tikzpicture}
\caption{Plot for the data set ``highschool-2011'' showing the number of cliques (top) and the computing time per clique (bottom) for the different temporal isolation types and different values of $c$ and $\Delta$.
The different $\Delta$-values are visualized by the different markers, with circles, triangles and squares denoting values of $0$, $5^3$, and $5^5$ respectively.}
\label{fig:plot_highschool2011}
\end{figure}

\begin{figure}[p]
\begin{tikzpicture}
\footnotesize
\begin{groupplot}[
	plotbyc,
	group style={
		group size=5 by 2,
		horizontal sep=5pt,
		vertical sep=10pt
	},
]
\input{plots/highschool2012.tex}
\end{groupplot}
\end{tikzpicture}
\caption{Plot for the data set ``highschool-2012'' (see also description of \cref{fig:plot_highschool2011}).}
\label{fig:plot_highschool2012}
\end{figure}

\begin{figure}[p]
\begin{tikzpicture}
\footnotesize
\begin{groupplot}[
	plotbyc,
	group style={
		group size=5 by 2,
		horizontal sep=5pt,
		vertical sep=10pt
	},
]
\input{plots/highschool2013.tex}
\end{groupplot}
\end{tikzpicture}
\caption{Plot for the data set ``highschool-2013'' (see also description of \cref{fig:plot_highschool2011}).}
\label{fig:plot_highschool2013}
\end{figure}

\begin{figure}[p]
\begin{tikzpicture}
\footnotesize
\begin{groupplot}[
	plotbyc,
	group style={
		group size=5 by 2,
		horizontal sep=5pt,
		vertical sep=10pt
	},
]
\input{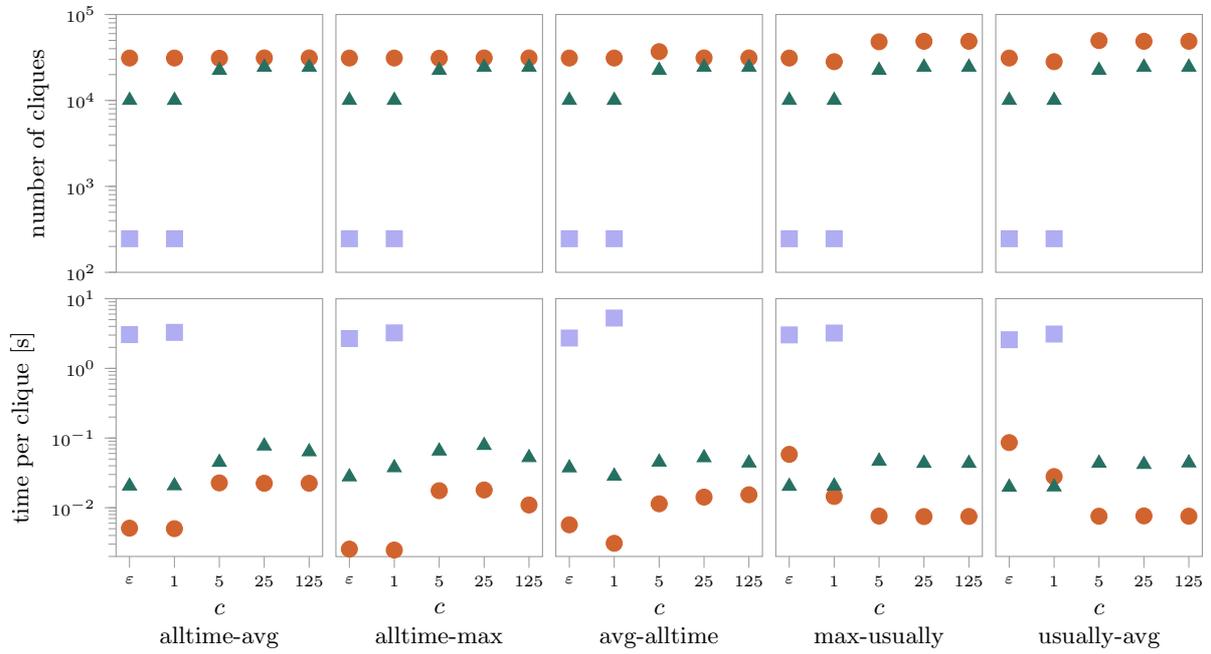}
\end{groupplot}
\end{tikzpicture}
\caption{Plot for the data set ``tij\_pres\_LH10'' (see also description of \cref{fig:plot_highschool2011}).}
\label{fig:plot_hospital}
\end{figure}

\begin{figure}[t]
\begin{minipage}{0.48\textwidth}
\begin{tikzpicture}
\begin{axis}[
	axis y line*=left,
	contplot,
	xmin=-60,
	xmax=780,
	ylabel={\ref*{data:time_by_delta} time per clique [s]},
	xlabel={$\Delta$}
]
\addplot[timedata]
table[
	col sep=comma,
	x={delta factor},
	y={time per clique},
] {consequtive_d_values_average_alltime.csv};
\label{data:time_by_delta}
\end{axis}
\begin{axis}[
	contplot,
	axis y line*=right,
	axis x line=none,
	xmin=-60,
	xmax=780,
	ylabel={\ref*{data:num_cliques_by_delta} number of cliques},
]
\addplot[cliquedata]
table[
	col sep=comma,
	x={delta factor},
	y={num_cliques},
] {consequtive_d_values_average_alltime.csv};
\label{data:num_cliques_by_delta}
\end{axis}
\end{tikzpicture}
\captionof{figure}{Number of cliques and running time for \avgalltime{}-isolation on the  data set ``tij\_pres\_LH10'' with $c=5$.}
\label{fig:delta_analysis}
\end{minipage}
\hfill
\begin{minipage}{0.48\textwidth}
\begin{tikzpicture}
\begin{axis}[
	axis y line*=left,
	contplot,
	xmin=0,
	xmax=36,
	ylabel={\ref*{data:time_by_c} time per clique [s]},
	xlabel={$c$},
]
\addplot[timedata]
table[
	col sep=comma,
	x={c},
	y={time per clique},
] {consequtive_c_values_average_alltime.csv};
\label{data:time_by_c}
\end{axis}
\begin{axis}[
	contplot,
	axis y line*=right,
	axis x line=none,
	xmin=0,
	xmax=36,
	ylabel={\ref*{data:num_cliques_by_c} number of cliques},
]
\addplot[cliquedata]
table[
	col sep=comma,
	x={c},
	y={num_cliques},
] {consequtive_c_values_average_alltime.csv};
\label{data:num_cliques_by_c}
\end{axis}
\end{tikzpicture}
\captionof{figure}{Number of cliques and running time for \avgalltime{}-isolation on the  data set ``tij\_pres\_LH10'' with $\Delta=5^3$.}
\label{fig:c_analysis}
\end{minipage}
\end{figure}

\section{Conclusion}\label{sec:concl}
We have lifted the concept of isolation from the static to the temporal setting, introducing six different types of temporal isolation.
For five out of those we developed algorithms and showed that enumerating maximal temporally isolated cliques is fixed-parameter tractable with respect to the isolation parameter.
This leaves one case (usually-max-isolation)  open for future research.

From an algorithm engineering perspective there is still room for improvement.
So far the practical running times make it hard to analyze larger data sets as done for example by \citet{BHMMNS18}.
Another possibility to approach this issue it to shift focus from the enumeration of all maximal temporally isolated cliques to the ``detection'' problem,
that is, to ``only'' search for one large temporally isolated cliques (if one exists). Depending on the application, this might still be a task worth investigating.
It could allow for better heuristic improvement such as pruning rules that remove parts of the input in which large cliques can be ruled out.

Finally, as in the static case, it would be natural to apply the isolation concepts to further community models such as for example temporal $k$-plexes~\cite{BHMMNS18}.

\paragraph*{Acknowledgments.} We want to thank our student assistant Fabian Jacobs for his work on the implementation of our algorithms.

\bibliography{bib}

\end{document}